\long\def\@makecaption#1#2{\ifx\@captype\@IEEEtablestring%
\footnotesize\begin{center}{\normalfont\footnotesize #1}\\
{\normalfont\footnotesize\scshape #2}\end{center}%
\@IEEEtablecaptionsepspace
\else
\@IEEEfigurecaptionsepspace
\setbox\@tempboxa\hbox{\normalfont\footnotesize {#1.}~~ #2}%
\ifdim \wd\@tempboxa >\hsize%
\setbox\@tempboxa\hbox{\normalfont\footnotesize {#1.}~~ }%
\parbox[t]{\hsize}{\normalfont\footnotesize \noindent\unhbox\@tempboxa#2}%
\else
\hbox to\hsize{\normalfont\footnotesize\hfil\box\@tempboxa\hfil}\fi\fi}
\newcommand{\etal}{\textit{et~al.}}
\newcommand{\eg}{\textit{e.g.},}
\newcommand{\ie}{\textit{i.e.},}
\newcommand{\dhit}{d^h_{im}}
\newcommand{\dmiss}{d^m_{im}}
\newcommand{\duc}{d^b_i}
\newtheorem{problem}{Problem}
\newtheorem{definition}{Definition}
\newtheorem{proposition}{Proposition}
\newtheorem{conjecture}{Conjecture}
\newtheorem{lemma}{Lemma}
\title{On the Complexity of Optimal Routing and Content Caching in Heterogeneous Networks} %\vspace{-0.8cm}
\author{Mostafa Dehghan$^1$, Anand Seetharam$^2$, Bo Jiang$^1$, Ting He$^3$, Theodoros Salonidis$^3$, \\ Jim Kurose$^1$, Don Towsley$^1$, and Ramesh Sitaraman$^{1,4}$\\

$^1$University of Massachusetts Amherst,
$^2$California State University Monterey Bay, \\
$^3$IBM T.J. Watson Research Center, 
$^4$Akamai Technologies Inc \\
{\tt \{mdehghan, bjiang, kurose, towsley, ramesh\}@cs.umass.edu}, \\ {\tt aseetharam@csumb.edu}, {\tt \{the, tsaloni\}@us.ibm.com}
}
\begin{document}
\maketitle

%\blfootnote{This work was supported in part by the U.S. Army Research Laboratory and the U.K. Ministry of Defence under Agreement Number W911NF-06-3-0001 and the National Science Foundation under Grant No. CNS-1413998. The views and conclusions contained in this document are those of the authors and should not be interpreted as representing the official policies, either expressed or implied, of the sponsors. The US and UK Governments are authorized to reproduce and distribute reprints for Government purposes notwithstanding any copyright notation hereon.}

\vspace{-0.35cm}
\begin{abstract}
We investigate the problem of optimal request routing and content caching in a heterogeneous network supporting in-network content caching with the goal of minimizing average content access delay. 
Here, content can either be accessed directly from a back-end server (where content resides permanently) or be obtained from one of multiple in-network caches. 
To access a piece of content, a user must decide whether to route its request to a cache or to the back-end server. Additionally, caches must decide which content to cache. 
We investigate the problem complexity of two problem formulations, where the direct path to the back-end server is modeled as  {\it i)} a congestion-sensitive or {\it ii)} a congestion-insensitive path, reflecting whether or not the delay of the uncached path to the back-end server depends on the user request load, respectively.
We show that the problem is NP-complete in both cases. We prove that under the congestion-insensitive model the problem can be solved optimally in polynomial time if each piece of content is requested by only one user, or when there are at most two caches in the network.
We also identify a structural property of the user-cache graph that potentially makes the problem NP-complete.
For the congestion-sensitive model, we prove that the problem remains NP-complete even if there is only one cache in the network and each content is requested by only one user.  
We show that approximate solutions can be found for both models within a $(1-1/e)$ factor of the optimal solution, and demonstrate a greedy algorithm that is found to be within 1\% of optimal for small problem sizes.
Through trace-driven simulations we evaluate the performance of our greedy algorithms, which show up to a $50\%$ reduction in average delay over solutions based on LRU content caching.
\end{abstract}

\section{Introduction} \label{sec:introduction}

In-network content caching has received considerable attention in recent years as a means to address the explosive growth in data access seen in today's networks. Its main premise is to store content at the network's edge -- close to the end users -- to reduce  user content access delay and network bandwidth usage.  The benefits of in-network content caching have been demonstrated in the context of CDN~\cite{Huang08, Sharma13, Sitaraman10} as well as hybrid networks comprised of cellular and MANETs or femto-cell networks~\cite{Westphal13, Golrezaei13, Poularakis13}.

%With the rapid growth of data traffic over cellular networks, it has been widely acknowledged that the conventional macro-cell architecture (4G-LTE) will not be able to support such traffic growth~\cite{Andrews08}.  
%Content caching has proven to reduce server traffic by more than $60\%$~\cite{Huang08, Sharma13}, in-network caching of content at storage-enabled nodes has received considerable attention as a means to improve the delay performance of end users by bringing content closer to the users, and to reduce the use of network link bandwidth. 
%
%The benefits of in-network content caching has been demonstrated in the wired, Internet (\eg\ CDN) context~\cite{Sitaraman10} as well as in hybrid networks comprised of cellular and MANETs or femto-cell networks~\cite{Golrezaei13, Poularakis13, Westphal13}. The \emph{FemtoCaching} architecture~\cite{Golrezaei12} effectively replaces back-haul capacity with storage capacity, allowing user content requests to be satisfied by caches at the wireless edge, with  back-haul links being used primarily to refresh cache content. 
%
%Prior work~\cite{Golrezaei12, Baev08, Borst10} has focused on the content placement problem, \ie\ determining which content should be placed  at which caches for a given topology and file popularity distribution, under the assumption that users greedily access content over the minimum delay path. 

In this paper, we investigate a {\it joint\/} problem of in-network content caching and request routing in a hybrid network where stored content can be accessed through multiple heterogeneous network paths. 
%considering the inter-related routing and caching challenges, with the goal of minimizing expected content access delay over all users.
We consider a scenario in which users send requests for content that is always available at a remote back-end server located in the network core but may also be present at multiple in-network caches. Access to the back-end server employs a potentially costly, congested, and/or slower uncached path, while the in-network caches may be reached through cheaper and faster network paths. This scenario arises in various hybrid network contexts where content can be accessed through multiple heterogeneous paths, including core/edge CDNs, macro/femto cell networks, cellular/MANET networks (\eg\ where the path to the network core is over cellular infrastructure and in-network caches are accessible via MANET paths), and cloud/edge cellular networks with edge storage at the cellular base stations. If a request is routed to an in-network cache that has the requested content, the request is served immediately. Otherwise, the cache must download the content from the back-end server before serving it to the user, incurring additional delay. Additionally, the cache must decide whether or not to store the downloaded content.

We address the following question: how should users route their requests among the paths to in-network caches and the back-end server, and what in-network cache management policy should  be adopted to minimize the average content access delay across all users? 
We consider two variants of the problem. First, we consider a \emph{congestion-insensitive} delay model (termed \emph{CI-model}), assuming that delays are independent of the traffic load on all paths. Second, we consider a \emph{congestion-sensitive} delay model (termed \emph{CS-model}), assuming that the delay to the back-end server (\ie\ the uncached path) depends on the traffic load. In a hybrid cellular/MANET network, the uncached path in the CI-model corresponds to GBR (guaranteed bit rate) 3GPP bearer service, while in the CS-model it corresponds to Non-GBR Aggregate Maximum Bit Rate (AMBR) bearer service~\cite{ixia}.

%There are several inter-related challenges here. First, caching decisions impact routing decisions and vice-versa. Second, delays of different paths can be different and congestion-sensitive. Third, a user may not be able to access all caches and different users may have different preferences for  content. Such factors are important to capture when modeling content access through real-world hybrid networks. On the other hand, they make it challenging to \emph{jointly} optimize both caches and routing decisions, as we will show later.  

Our goal in this paper is two-pronged. First, we seek a principled understanding of the computational complexity of the joint caching and routing problem: i) Can the general problem be solved optimally in polynomial time? ii) If not, are there problem instances that are tractable and which of the above modeling aspects make the general problem intractable? Second, we seek efficient approximate solutions to the joint routing/caching problem, with approximation guarantees, that work well in practice.

Toward our first goal, we provide a unified optimization formulation of the joint caching and routing problem for both models and show this problem is NP-complete in the general case. Then, we investigate which factors contribute to the problem complexity. 
%For the CI-model we identify two cases where the problem can be solved optimally in polynomial time. We also identify a structural property of the user-cache access graph which could be a potential root cause of the problem complexity. For the CS-model, we prove that the problem remains NP-hard even in the case of a single cache and each user accessing a single distinct content.  
For the CI-model, we prove that the optimal solution can be found in polynomial time in two special cases: a) when each user requests a single piece of content, or b) when there are at most two in-network caches.
%For a) we prove that the problem reduces to finding a maximum weighted matching on a bipartite graph; for b) we prove that the constraint matrix of a relaxed version of the problem that allows fractional content placement is totally unimodular and yields an integral optimal solution. 
%For the CI-model, we prove that the optimal solution can be found in polynomial time when the number of caches is at most two or when each user requests a single distinct content. 
We also identify a condition which is potentially the root cause of the complexity of the problem in the general case $-$ cycles with an odd number of users and caches in a graph that represents the network.
%Moreover, we identify a potential root cause of complexity -- cycles with an odd number of users and caches in the graph that represents the network. 
%We also identify a condition on the graph structure formed by the users and caches which could potentially be a root cause of the NP hardness of the general problem. 
For the CS-model, we prove that the problem is ``harder'': it remains NP-complete even if there is a single cache and each user accesses a single distinct content.
%For the CS-model, however, we prove that the problem remains NP-hard even if there is one cache and each user requests a single content.
These results provide valuable insights on the problem complexity. 
%They show that heterogeneous user delays to caches, multiple users having different access preference to the same content and content replication at multiple caches and potentially the structure of the user-cache access graph can be primary reasons for the exponential time complexity of the problem.

Toward our second goal, we show that the problem of optimal joint caching and routing for both the CI-model and the CS-model can be formulated as maximization of a monotone submodular function subject to matroid constraints. This enables us to devise two greedy algorithms. The first one has a higher complexity but can produce solutions within a $(1  - 1/e)$  factor of the optimal solution for both the CI-model and CS-model. The second algorithm has a lower complexity but does not have known approximation guarantees.  
We evaluate the performance of these algorithms through numerical evaluations and trace-driven simulations on a large dataset of approximately 9 million requests for 3 million content items. The results show that both algorithms are within $1\%$ of the optimal for small problem sizes where computing the optimal solution is feasible and that significant reductions (up to $50\%$) in content access delay can be achieved over traditional LRU-based content caching schemes.

%We develop greedy approximation algorithms that produce solutions within $1  - 1/e$ constant factor of optimal for both the CI-model and CS-model.
%
%%We next identify scenarios where the joint routing and caching problem is polynomially solvable - a) when each user is interested in downloading only a single file, b) when the network has  only two caches. We also identify a simple network comprising of three caches and three users to understand a potential cause of complexity of the  problem. Studying these simple cases provide valuable insights;  they show that heterogeneous user delays to caches, multiple users having different access preference to the same content and content replication at multiple caches are primary reasons for the exponential time complexity of the problem.

Our contributions can be summarized as follows:
\begin{itemize}
\item We provide a unified optimization formulation for the joint caching and routing problem for the CI-model and the CS-model and prove that the problem is NP-complete in both cases.

%\item We show that when knowledge or estimate of content popularity distribution exists the optimal solution should adopt static caching based on content popularity. 

\item We derive insights into problem complexity by considering several special cases, some of which are shown to admit efficient solutions, while others remain computationally hard.  
%We use different proof techniques for each case which can be of interest in their own respect. 

\item We develop a greedy caching and routing algorithm that achieves an average delay within a $(1 - 1/e)$ factor of the optimal solution and a second greedy algorithm of lower complexity. 

\item We evaluate the performance of these algorithms through numerical evaluations and trace-driven simulations.
Numerical results show that the greedy algorithms perform close to the optimal solution when computing the optimal solution is feasible.
Our results from trace-driven simulations show that the greedy algorithms yield significant performance improvement compared to solutions based on traditional LRU caching policy.
\end{itemize}

The paper is organized as follows. The network model and the joint caching and routing problem formulation are presented in Sections~\ref{sec:model} and ~\ref{sec:problem}. Sections~\ref{sec:constant} and~\ref{sec:mm1}, present the complexity results for the congestion-insensitive and congestion-sensitive cases, respectively. The approximation algorithms are presented in Section~\ref{sec:approx} and their performance is evaluated in Section~\ref{sec:trace}. Section~\ref{sec:relatedWork} reviews related work, and Section~\ref{sec:conclusion} concludes the paper.

\section{Network Model}
\label{sec:model}

\begin{figure}[t]
\begin{center}
  	\includegraphics[scale=0.25]{./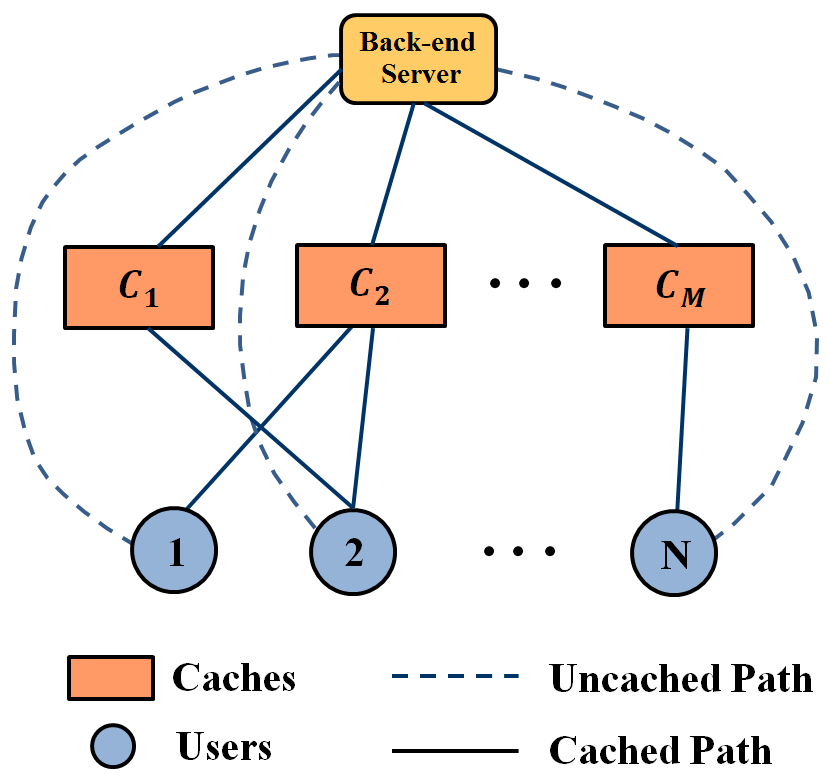} % scale=0.265
  \caption{Hybrid network with in-network caching}
 \vspace{-0.5cm}
    \label{fig:system}
\end{center}
\end{figure}

In this section, we consider the network shown in Figure~\ref{fig:system} with  $N$ users generating requests for a set of $K$ unique files $F~=~\{f_1, f_2, \ldots, f_K\}$ of unit size. Throughout this paper, we will use the terms content and file interchangeably. We assume that these files reside permanently at the back-end server. As shown in Figure~\ref{fig:system}, there are  $M$ caches in the network that can serve user requests.

All files are available at the back-end server and users are directly connected to this server via a cellular infrastructure. We refer to the cellular path between the user and the back-end server as the \emph{uncached path}. Each user can also access a subset of the $M$ in-network caches where the content might be cached. We refer to the connection between the user and a cache as a \emph{cached path}.

Let $C_m$ denote the storage capacity of the $m$-th cache measured by the maximum number of files it can store. If user $i$ requests file $j$ and it is present in the cache, then the request is served immediately. We refer to this event as a cache hit. However, if content $j$ is not present in the cache, the cache then forwards the request to the back-end server, downloads file $j$ from the back-end server and forwards it to the user. We refer to this event as a cache miss, since it was necessary to download content from the back-end server in order to satisfy the request. \emph{Note that in case of a cache miss, the cache can decide whether to keep the downloaded content.}

User $i$ generates requests for the files in $F$ according to a Poisson process of aggregate rate $\lambda_i$. Aggregate request rate of all users is $\lambda$. We denote by $q_{ij}$ the probability that user $i$ generates a request for file $j$ (referred to as the \emph{file popularity}). The popularity of the same file can vary from one user to another.

Let $A = [a_{im}]$  denote the connections between users and caches, with $a_{im} = 1$ if user $i$ is connected to cache $m$, and $a_{im} = 0$, otherwise. For the user-cache connections, let $\dhit$ and $\dmiss$ denote the average delays incurred by user $i$ in the event of a cache hit or miss at cache $m$, respectively. We assume without loss of generality that  $\dmiss  > \dhit$, \ie\ cache misses always incur greater delays than cache hits. We consider two models for the delay over the path from users to the back-end server. The first is a congestion-insensitive, constant-delay model where the delays through the uncached path are independent of the traffic load on the link to the back-end server. 
 In this case, the average delay experienced for a request by user $i$ sent over the uncached path is $\duc$.
The second model is a congestion-sensitive delay model where delays experienced over the uncached path depend on the traffic load. In this case, we assume the back-end server has service rate $\mu$, and model the connections to the back-end server as an M/M/1 queue. The delay experienced over the uncached path then consists of an initial access delay with average $\duc$ and a queuing (waiting plus service) delay with average $1/(\mu - \lambda_q)$, assuming $\lambda_q$ is the request rate on the queue.
 
%For the network model, we consider a weighted complete bipartite graph $G = (\{b\}\cup \mathcal{C}, \mathcal{U}, E)$, with the set of users $\mathcal{U}$ connected to the set of caches in $\mathcal{C}$ and a back-end server $b$. For an edge connecting a user $i$ to a cache $m$, the weight of the edge equals the hit delay $\dhit$, and for an edge that connects a user to the back-end server the weight is set to $\duc$. If a user is not connected to a cache or the back-end server, the weight of the corresponding edge is set to an arbitrarily large number.

\section{Problem Formulation}
\label{sec:problem}

In this work, we consider a joint caching and routing problem with the goal of minimizing average content access delay over the requests of all users for all files. The solution to this problem requires addressing two closely-related questions 1) How should cache contents be managed -– which files should be kept in the caches, and what cache replacement strategy should be used? and 2) How should users route their requests between the cached and uncached paths?

For our routing policy, we define the decision variable $p_{ijm}$ that denotes the fraction of the requests of user $i$ for content $j$ sent to cache $m$. User $i$ sends the remaining $1 - \sum_m{p_{ijm}}$ fraction of her requests for content $j$ to the back-end server through the uncached path.

It is shown in~\cite{Liu_98} for a single cache that given a routing policy, \emph{static caching} achieves minimum expected delay. With static caching, a set of files is stored in the cache, and the cache content does not change in the event of a cache hit or miss. The argument in~\cite{Liu_98} was extended in~\cite{milcom} to a network of caches to show that static caching achieves minimum delay under a fixed routing policy. Hence, we define the binary variables $x_{jm} \in \{0, 1\}$ to denote the content placement in caches, where $x_{jm}=1$ indicates file $j$ is stored in cache $m$ and $x_{jm}=0$ indicates otherwise.

%In~\cite{milcom}, we show that under any routing policy, \emph{static caching} achieves minimum expected delay from a single cache. This result can be easily seen to hold for multiple caches.  With static caching~\cite{Liu_98}, a set of files is kept at the cache, and the cache content does not change in the event of a cache hit or miss. Assuming an aggregate request of $\sum_{ij}{\lambda_i q_{ij} p_{ijm}}$ arriving at cache $m$ with storage capacity $C_m$, it is optimal to store the top $C_m$ files with the highest weighted popularity defined as (see~\cite{milcom})
%\[q_j = \sum_{i=1}^N{\lambda_i q_{ij} p_{ijm} (d^m_{im} - d^h_{im})}.\]

We denote by $D(\mathbf{x}, \mathbf{p})$ the expected delay obtained by content placement strategy $\mathbf{x} = [x_{jm}]$, and routing strategy $\mathbf{p} = [p_{ijm}]$. The optimal solution to the problem of joint caching and routing is therefore obtained by solving the following Mixed-Integer Program (MIP):
\begin{equation}
\begin{aligned}
\label{eq:main_opt}
\text{minimize } & D(\mathbf{x}, \mathbf{p}) \\
\text{such that } & \sum_m{p_{ijm}} \le 1 \quad &&\forall{i,j}\\
 & \sum_j{x_{jm}} \le C_m &&\forall{m}\\
 & x_{jm} \in \{0, 1\} &&\forall{j, m}\\
 & 0 \le p_{ijm} \le a_{im} &&\forall{i,j,m}
\end{aligned}
\end{equation}

In the next two sections, we express the delay function $D(\mathbf{x}, \mathbf{p})$ for the cases of {\it i)} congestion-insensitive and {\it ii)} congestion-sensitive uncached path delay models, and discuss why the joint caching and routing problem is NP-complete.

%Note that as we discussed, given a routing policy, optimal content placement is easy to compute. Also, for a given content placement $\mathbf{x}$, the optimal routing policy can be found in polynomial time if the objective function $D(\mathbf{p}) = D(\mathbf{p};\mathbf{x})$ is convex. We will show later in the paper that $D(\mathbf{p})$ is convex in the two variants of the problem that we consider.

%One important assumption that we make in the formulation of the problem is that we assume that a file is either stored in the cache or it is not, \ie\ $x_{jm}\in \{0, 1\}$.
%Later in the paper, we will compare the optimal solution to problem~\eqref{eq:main_opt} with binary constraints on content placement to the corresponding problem with fractional content placement. We will refer to the latter as the \emph{relaxed problem}, as $x_{jm}$ variables are `relaxed' to take real values, $0 \le x_{jm} \le 1$.

%Allowing fractions of files to be stored in the caches, we get another optimization problem that is generally referred to as the \emph{relaxed problem}, as $x_{jm}$ variables are 'relaxed' to take real values, $0 \le x_{jm} \le 1$. In the next section, we will discuss the properties of the relaxed version of our optimization problem.

\section{Congestion-Insensitive Uncached Path}
\label{sec:constant}
First, we consider the case where delays on the uncached path, $\duc$, do not depend on the traffic load on the back-end server. For a given content placement $\mathbf{x}$ and routing policy $\mathbf{p}$, the average delay can be written as
\begin{align}
\label{eq:opt_constant}
D(\mathbf{x}, \mathbf{p}) &= \frac{1}{\lambda}\sum_i\sum_j\lambda_i q_{ij} \Biggl( \sum_m{p_{ijm}x_{jm}\dhit} \notag\\
+& \sum_m{p_{ijm}(1-x_{jm})\dmiss} + \Big( 1 - \sum_m{p_{ijm}}\Big)\duc \Biggr)
\end{align}

Without loss of generality, we assume that $\dhit~<~\duc~<~\dmiss$ whenever user $i$ is connected to cache $m$, \ie\ $a_{im} = 1$. Note that if $\duc \ge \dmiss, \forall{i}$, users connected to cache $m$ will never use the uncached path. Also, if $\duc \le \dhit, \forall{i}$, none of the users will actually use cache $m$.

%It is easy to see, as we discussed in~\cite{milcom}, that with the congestion-insensitive
It is easy to see that with the congestion-insensitive model, given a content placement, the average minimum delay is obtained by routing requests for the cached content to caches, and routing the remaining requests to the uncached path. Note that under this routing policy no cache misses occur. Therefore, the solution to the problem of joint caching and routing in the case of congestion-insensitive uncached path delays are obtained by solving the following binary linear program:
\begin{equation}
\label{eq:opt_simpl_constant}
\begin{split}
\text{minimize } & \frac{1}{\lambda}\sum_i{\lambda_i \sum_j{q_{ij} \left[ \sum_m{p_{ijm} \dhit}  + (1 - \sum_m{p_{ijm}}) \duc \right]}} \\ 
\text{such that } & \sum_m{p_{ijm}} \le 1 \\
 & \sum_j{x_{jm}} \le C_m \\
 & p_{ijm} \le x_{jm} \cdot a_{im} \\
 & x_{jm} \in \{0, 1\} \\
 & p_{ijm} \ge 0.
\end{split}
\end{equation}
Note that $D(\mathbf{x}, \mathbf{p})$ is a linear function of the routing variables. Also note the additional constraint $p_{ijm} \le x_{jm} \cdot a_{im}$, which is due to the fact that only requests for cached content are routed to caches. Since $\dhit < \duc$ and $\duc < \dmiss$, users have no incentive to split the traffic for any content between the cached and uncached paths, and hence there will be no routing variable, $p_{ijm}$, with a fractional value in the optimal solution, \ie\ $p_{ijm}\in \{0,1\}$.

\subsection{Hardness of General Case}
The above formulation of the joint caching and routing problem is a generalization of the Helper Decision Problem (HDP) proved to be NP-complete in~\cite{Golrezaei12}. Our formulation is more general as we consider non-homogeneous delays for the cached and uncached paths. HDP reduces to the optimization problem in~\eqref{eq:opt_simpl_constant} by setting $\duc = 1$, $\dhit = 0$, and $C_m = C$, where $C$ is the cache size at all caches in HDP.

%%%% !!!!!!!!!!!!!!! Make this part more clear, and add it back %%%%%%%%%%%%%
%Clearly, the objective function in the optimization problem~\eqref{eq:opt_simpl_constant} is convex. Hence, the optimal solution to its relaxed version can be found in polynomial time, although this solution is different from the optimal solution to~\eqref{eq:opt_simpl_constant} since the optimal solution to the relaxed problem may have fractional $x_{jm}$ values that are not feasible in the original problem~\eqref{eq:opt_simpl_constant}. In the next section, however, we will see that the objective function in the case of congestion-sensitive delay model is not convex even with real valued $x_{jm}$, and the solution to the relaxed program can not be found efficiently.

%%%%% Add this back if it gets accepted %%%%%%%
%Although the problem is NP-hard in general, we show in~\cite{milcom} that with a single cache in the network, the optimal solution to the joint caching and routing problem can be found in polynomial time by statically caching the files with the highest weighted popularity defined as
%\[q_j = \displaystyle \sum_{i=1}^{N} \lambda_i q_{ij}(\duc - d^h_i),\]
%where $d^h_i$ is the hit delay for user $i$ (subscript $m$ is omitted because there is only one cache). Requests for cached content are then routed to the cache, and the remaining requests to the back-end server. In the next section, we will show that the joint caching and routing problem can be solved in polynomial time for more (non-trivial) special cases, and discuss what makes the problem hard in general.

Although the problem is NP-complete in general, we will show that the joint caching and routing problem can be solved in polynomial time for several special cases, and discuss what makes the problem ``hard'' in general.
We first consider a restrictive setting where each user is interested in only one file and each file is requested by only one user. Next, we consider a network with two caches (but each user may be interested in an arbitrary number of files). We present polynomial time solution algorithms for both cases. Finally, we present an example that demonstrates what we conjecture to be {\it the} source of the complexity of this problem.

\subsection{Special Case: One File per User}
Consider the network illustrated in Figure~\ref{fig:system}, but assume each user is interested in only one file, \ie\ $q_{ii} = 1$, and $q_{ij} = 0$ for $i\neq j$. In this case, the optimal solution to the joint caching and routing problem can be found in polynomial time based on a solution to the \emph{maximum weighted matching} problem.

Note that in this case, the number of files equals the number of users, \ie\ $N=K$. To avoid triviality, we assume that the number of users is larger than the capacity of each cache in the network, \ie\ $C_m < N, \forall{m}$. The assumption that each user is interested in only one file allows us to re-write the objective function in~\eqref{eq:opt_simpl_constant} as
\begin{equation*}
\begin{split}
D(\mathbf{x}, \mathbf{p}) &= \frac{1}{\lambda}\sum_i{\Big( \sum_m{\lambda_i p_{iim} \dhit} + \lambda_i(1-\sum_m{p_{iim}})\duc \Big)} \\
&= \frac{1}{\lambda} \Big( \sum_{i = 1}^{N}{\lambda_i \duc} - \sum_i{\sum_m{\lambda_i p_{iim} (\duc - \dhit)}} \Big).
\end{split}
\end{equation*}
Since $\sum_{i = 1}^{N}{\lambda_i \duc}$ is a constant independent of the decision variables, minimizing the above objective function is equivalent to maximizing $\sum_i{\sum_m{\lambda_i p_{iim} (\duc - \dhit)}}$. Note that $\lambda_i(\duc - \dhit)$ can be interpreted as the gain obtained by having file $i$ in cache $m$. This problem can then be naturally seen as matching files to caches with the goal of maximizing the sum of individual gains. In what follows, we map this problem to the maximum weighted matching problem.

For each cache of size $C_m$, we introduce $C_m$ nodes $\{v_m^1, v_m^2, \ldots, v_m^{C_m}\}$ representing unit size micro-caches that form cache $m$. Let $V = \{v_1^1, v_1^2, \ldots, v_1^{C_1},\ldots, v_M^1, \ldots, v_M^{C_M}\}$ denote the set of of all such nodes, and let $U~=~\{u_1,u_2,\ldots,u_N\}$ denote the set of all files.
We define the bipartite graph $G(U, V, E)$ with $\lambda_i(\duc~-~\dhit)$ as the weight of the edges connecting node $u_i$ to nodes $v^s_m, \forall{s\in\{1, 2, \dots, C_m\}}$. Figure~\ref{fig:matching} demonstrates a bipartite graph with user/file nodes $u$ and the micro-cache nodes $v$ with the edge weights shown for some of the edges.
Note that the bipartite graph consists of $|U| + |V| = N + \sum_m{C_m}$ vertices and $|E|~=~O(N\sum_m{C_m})$ edges.

The optimal solution to the joint content placement and routing problem corresponds to the maximum weighted matching for graph $G$. The edges selected in the maximum matching determine what content should be placed in which cache. Users then route to caches for cached content, and to the uncached path for the remaining files.

The maximum weighted matching problem for bipartite graphs can be solved in $O(|V|^2|E|)$ using the Hungarian algorithm~\cite{west01}. In our context, the complexity is $O(M^3N^4)$. Note that $\sum_m{C_m} = O(MN)$ as we assumed $C_m < N, \forall{m}$. Therefore, we can solve the joint caching and routing problem in polynomial time when users are interested in one file only.

\begin{figure} [ht]
  \begin{center}
  \includegraphics[scale=0.25]{./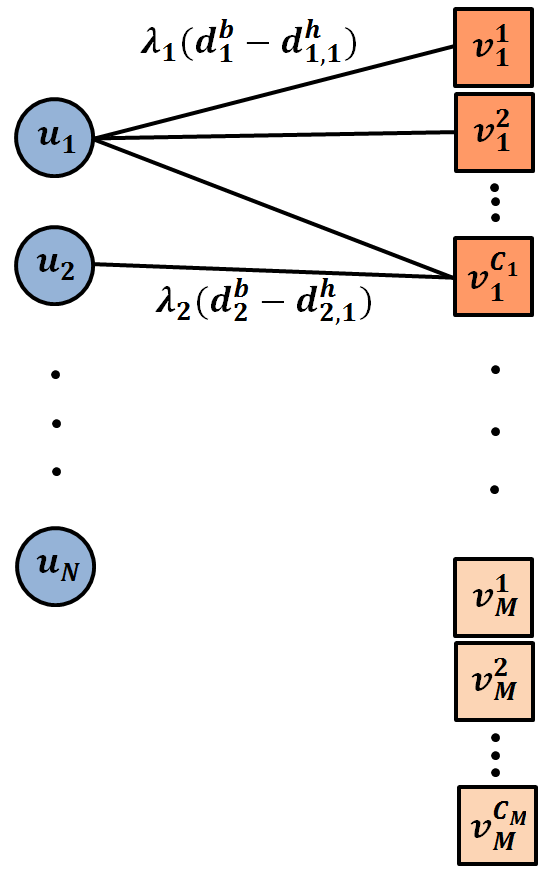}
  \caption{Modeling content placement as a maximum weighted matching problem}
  \vspace{-0.5cm}
    \label{fig:matching}
  \end{center}
\end{figure}

\subsection{Special Case: Network with Two Caches}

Next, we show that the optimal solution for the joint caching and routing
problem can be found in polynomial time when there are only two caches in the network.
Specifically, we prove that the solution to the integer program~\eqref{eq:opt_simpl_constant} can be found in polynomial time when there are two caches in the network. In the remainder of this section, we assume that $x_{jm}$ take real values.

Before delving into the proof we introduce some definitions and results
from~\cite{hoffman10}:

\begin{definition}
A square integer matrix is called \emph{unimodular} if it has determinant $+1$ or $-1$.
\end{definition}

\begin{definition}
An $m \times n$ integral matrix $A$ is \emph{totally unimodular} if
the determinant of every square submatrix is $0$, $1$, or $-1$.
\end{definition}

\begin{proposition}
\label{prop:integral}
If for a linear program $\{\max c^Tx : Ax \le b\}$, $A$ is totally unimodular and $b$ is integral, then there is an optimal solution to the linear program that is integral.
\end{proposition}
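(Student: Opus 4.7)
The plan is to prove this by appealing to the standard vertex (basic feasible solution) structure of linear programs and showing that total unimodularity forces every vertex of the feasible polyhedron $P = \{x : Ax \le b\}$ to have integer coordinates. Once that is established, the proposition follows immediately, since whenever the LP $\max\{c^T x : x \in P\}$ has a finite optimum, it attains that optimum at some vertex of $P$, and that vertex is an integral optimal solution.

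First I would recall the characterization of vertices: a point $x^\star \in P$ is a vertex if and only if the set of constraints of $Ax \le b$ that are tight at $x^\star$ contains $n$ linearly independent rows, where $n$ is the number of variables. Collecting these $n$ rows, I obtain a nonsingular $n \times n$ submatrix $B$ of $A$ and a subvector $b_B$ of $b$ such that $x^\star = B^{-1} b_B$. The goal is then to show that $B^{-1} b_B$ has integer entries, which (since $b_B$ is integral by hypothesis) reduces to showing that $B^{-1}$ itself is an integer matrix.

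The key step, and the main technical content of the proof, is the identity
\begin{equation*}
B^{-1} = \frac{1}{\det(B)} \, \operatorname{adj}(B),
\end{equation*}
where $\operatorname{adj}(B)$ is the matrix of cofactors of $B$. Because $A$ is totally unimodular, every square submatrix of $A$ has determinant in $\{-1,0,+1\}$; in particular, $\det(B) \in \{-1,+1\}$ (it is nonzero since $B$ is nonsingular), and every cofactor of $B$ is the determinant of a smaller square submatrix of $A$, hence is in $\{-1,0,+1\}$. Therefore $\operatorname{adj}(B)$ is an integer matrix and dividing by $\pm 1$ keeps it integral, giving $B^{-1} \in \mathbb{Z}^{n\times n}$, and hence $x^\star = B^{-1} b_B \in \mathbb{Z}^n$.

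The hardest part, conceptually, is not any single calculation but rather correctly identifying the ``active-constraint'' submatrix $B$ and arguing that it is a genuine submatrix of $A$ so that total unimodularity applies to it. With that handled, I would close by noting that since every vertex of $P$ is integral and an LP maximum (when finite) is attained at a vertex, the proposition holds: if the LP is unbounded the claim is vacuous, and otherwise at least one optimal vertex, hence an integral optimal solution, exists.
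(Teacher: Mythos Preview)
The paper does not actually prove this proposition; it is quoted as a known result from~\cite{hoffman10} and used as a black box in the two-cache argument. Your write-up supplies the classical Cramer's-rule proof, which is the standard textbook route and is essentially correct.

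One genuine gap, however: your step ``a finite optimum is attained at some vertex of $P$'' presupposes that $P=\{x:Ax\le b\}$ has vertices at all. That requires $\operatorname{rank}(A)=n$; if $A$ has fewer than $n$ rows, or more generally has deficient column rank, then $P$ contains a line and has \emph{no} vertices, yet the LP can still have a finite optimum. In that case your argument does not apply as written. The usual fix is either to argue directly that every minimal face of $P$ contains an integral point (each minimal face is $\{x:A_I x=b_I\}$ for some row set $I$, and total unimodularity of $A_I$ plus integrality of $b_I$ gives an integral solution via the Smith/Hermite normal form or an easy lattice argument), or to factor out the lineality space and reduce to a pointed polyhedron. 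This wrinkle is harmless for the paper's application, since the constraint system in~\eqref{eq:opt_simpl_constant} includes nonnegativity of all variables and is therefore pointed, but it is a gap relative to the proposition as stated.
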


%First, we note that the set of constraints in the optimization problem in~\eqref{eq:opt_simpl_constant} can be written as $A x \le b$ where the entries of $A$ and $b$ are all integers. Figure~\ref{fig:A2} shows an example of the matrix $A$ for a network with two caches and three users.

%In Figure~\ref{fig:A2}, we split the matrix $A$ into six blocks through three lines. The vertical line (logically) separates the content placement variables, $x_{jm}$, from the routing variables, $p_{ijm}$s. The rows above the first horizontal line correspond to the constraints for $\sum_m{p_{ijm}} \le 1$. The rows between the two horizontal lines correspond to the constraints $\sum_j{x_{jm}} \le C_m$, and the remaining rows correspond to the constraints $p_{ijm} - x_{jm} \le 0$.

Note that the three sets of constraints in the optimization problem
in~\eqref{eq:opt_simpl_constant}, namely, i) $\sum_m{p_{ijm}} \le 1$, ii) $\sum_j{x_{jm}} \le C_m$, and iii) $p_{ijm} - x_{jm} \cdot a_{im} \le 0$ can be written in the form $A z \le b$ where the
entries of $A$ and $b$ are all integers, and $z$ consists of the $x_{jm}$ and $p_{ijm}$ entries. %%%%%%
From Proposition~\ref{prop:integral}, then, it suffices to show that
the matrix $A$ is totally unimodular for a network with two caches
to prove that optimization~\eqref{eq:opt_simpl_constant} can be
solved in polynomial time. To prove that the matrix $A$ is totally
unimodular we use the following result from~\cite{schrijver03}:
\begin{proposition}
\label{prop:TU}
A matrix is totally unimodular if and only if for every subset $R$ of rows, there is an assignment $s:R \to \pm1$ of signs to rows so that the signed sum $\sum_{r \in R} s(r)r$ (which is a row vector of the same width as the matrix) has all its entries in $\{0,\pm 1\}$.
\end{proposition}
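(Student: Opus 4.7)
The plan is to establish the two directions of the Ghouila-Houri characterization separately, using LP integrality for one direction and a determinantal argument for the other.

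For the sufficient direction (sign condition implies total unimodularity), I would proceed by induction on the order $k$ of square submatrices of $A$. The base case $k=1$ is immediate from the hypothesis applied to a singleton row set $R=\{r\}$: the signed sum is $\pm r$, whose entries must lie in $\{0,\pm 1\}$, so $A$ itself is a $\{0,\pm 1\}$ matrix. For the inductive step, take a nonsingular $k\times k$ submatrix $B$; it suffices to show $|\det B|=1$. Apply the hypothesis to the row set consisting of all rows of $B$, obtaining signs $s\in\{\pm 1\}^k$ with $s^TB\in\{0,\pm 1\}^k$. Since $B$ is nonsingular, $s^TB\neq 0$. The key step is to expand $s^TB e_j$ for a column $j$ where this entry is nonzero and use Cramer's rule together with the inductive hypothesis (applied to $(k-1)\times(k-1)$ minors of $B$) to conclude that the nonzero value $\pm 1$ is an integer multiple of $\det B$, forcing $|\det B|=1$.

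For the necessary direction (total unimodularity implies the sign condition), fix a subset $R$ of rows and let $A_R$ denote the corresponding submatrix. The plan is to set up the linear program
\begin{equation*}
\max\,\mathbf{1}^T y \quad\text{subject to}\quad -\mathbf{1}\le A_R^T\,(2y-\mathbf{1})\le \mathbf{1},\ 0\le y\le \mathbf{1},
\end{equation*}
in the variable $y\in\mathbb{R}^{|R|}$. After expanding the constraints, the constraint matrix is obtained by stacking $A_R^T$, $-A_R^T$, $I$, and $-I$ with integral right-hand sides. Total unimodularity is preserved under transposition, negation, and appending rows of an identity matrix, so this stacked matrix is totally unimodular. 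By Proposition~\ref{prop:integral} the LP has an integral optimum $y^*\in\{0,1\}^{|R|}$, and setting $s=2y^*-\mathbf{1}\in\{\pm 1\}^{|R|}$ yields a signing whose signed row combination $s^TA_R$ lies in $\{0,\pm 1\}^n$, which is exactly the required assignment.

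The main obstacle will be the determinantal step in the sufficiency direction: relating the signed row combination $s^TB$ back to $\det B$ via cofactors and showing that $|\det B|\le 1$ (not merely that $\det B$ is integral) takes a bit of care. If the cofactor bookkeeping becomes unwieldy, an equivalent route is to argue by contradiction: assume $|\det B|\ge 2$, use the inductive hypothesis to write $B^{-1}$ with denominators dividing $\det B$, and derive that a certain entry of $s^TB$ is forced outside $\{0,\pm 1\}$, contradicting the hypothesis. I would select whichever variant keeps the exposition most direct.
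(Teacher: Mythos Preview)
The paper does not prove Proposition~\ref{prop:TU}; it is quoted verbatim as a known characterization of total unimodularity from~\cite{schrijver03} and then \emph{applied} in Appendix~\ref{sec:appndx_two_cache} to the particular constraint matrix arising from the two-cache problem. So there is no proof in the paper to compare your attempt against.

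That said, your sufficiency sketch is essentially the standard argument: pass to a minimal square submatrix $B$ with $|\det B|\ge 2$, use the inductive hypothesis to see that the adjugate $B^{*}$ is a $\{0,\pm 1\}$ matrix, and derive a contradiction. The cleanest way to finish is a parity argument: apply the hypothesis to the columns in the support of the first column $b$ of $B^{*}$ to get a signing $\sigma$ with $B\sigma\in\{0,\pm1\}^{k}$; since $\sigma\equiv b\pmod 2$ on that support, $B\sigma\equiv Bb=(\det B)\,e_{1}\pmod 2$, which forces either $B\sigma=0$ (contradicting nonsingularity) or $(\det B)\sigma=\pm b$ (contradicting $|\det B|\ge 2$). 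Your ``Cramer plus cofactors'' description is pointing at this, but the actual mechanism is the mod-$2$ step, which you do not mention.

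Your necessity direction, however, has a real gap. After the substitution $s=2y-\mathbf{1}$, the LP constraints become $-\mathbf{1}\le A_{R}^{T}s\le \mathbf{1}$, $-\mathbf{1}\le s\le \mathbf{1}$; the constraint matrix $[A_{R}^{T};-A_{R}^{T};I;-I]$ is indeed TU with integral right-hand side, so Proposition~\ref{prop:integral} gives an integral optimum $s^{*}$. But then $s^{*}\in\{-1,0,1\}^{|R|}$, not $\{\pm 1\}^{|R|}$, and maximizing $\mathbf{1}^{T}y$ does not eliminate zeros. Concretely, for the TU matrix $A=\begin{pmatrix}1\\1\end{pmatrix}$ with $R$ equal to both rows, your LP is $\max\, s_{1}+s_{2}$ subject to $|s_{1}+s_{2}|\le 1$, $|s_{i}|\le 1$, whose integral optima include $(1,0)$, which is not a valid signing. (If instead you expand in the variable $y$, the constraint matrix carries a factor of $2$ and is not TU, so Proposition~\ref{prop:integral} does not even apply.) The standard repair is to set $d=A_{R}^{T}\mathbf{1}$ and use the bounds $\lfloor d/2\rfloor\le A_{R}^{T}y\le \lceil d/2\rceil$, $0\le y\le \mathbf{1}$: now $y=\tfrac{1}{2}\mathbf{1}$ is feasible, the data are integral, an integral $y^{*}\in\{0,1\}^{|R|}$ exists, and $s=\mathbf{1}-2y^{*}\in\{\pm1\}^{|R|}$ satisfies $A_{R}^{T}s=d-2A_{R}^{T}y^{*}\in\{0,\pm1\}^{n}$.
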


In Appendix~\ref{sec:appndx_two_cache}, we give a constructive proof showing that for any subset $R$ of rows
of $A$ we can find an assignment $s$ that satisfies Proposition~\ref{prop:TU}.

%\subsection{Discussion}
%The complexity of the general problem of joint caching and routing is due
%to two main factors: {\it i)} having multiple copies of a file in different caches,
%and {\it ii)} 
%The reason why the case of one file per user can be solved in polynomial time
%is due to the fact that 
%
%*** Talk about cases that degenerate to this case ***
%
%*** Lesson learned: not multiple copies of one file are placed in different caches. ***

%The property that makes joint caching and routing easy to solve when users are interested in one file only, or when there are two caches in the network, is that finding a solution where each user accesses at most one cache for each content can be done in polynomial time. In the next section, we discuss that this is not true in more general network settings.
%We believe this is what makes the integer solution differ from the relaxed solution in networks with more than two caches.

\subsection{Complexity Discussion}

\begin{figure}[t]
\begin{center}
\includegraphics[scale=0.3]{./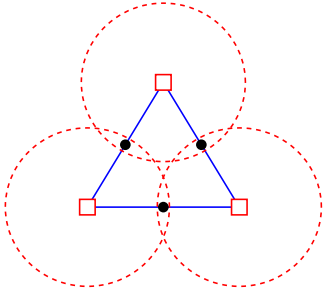}
%\vspace{-1in}
\caption{A network with three users and three caches. Each user is in the communication range of two of the caches.}
\label{fig:cycle_network}
\end{center}
\end{figure}

\begin{figure}
\centering
\begin{subfigure}{(a)} % .5\textwidth
  \centering
  \includegraphics[scale=0.15]{./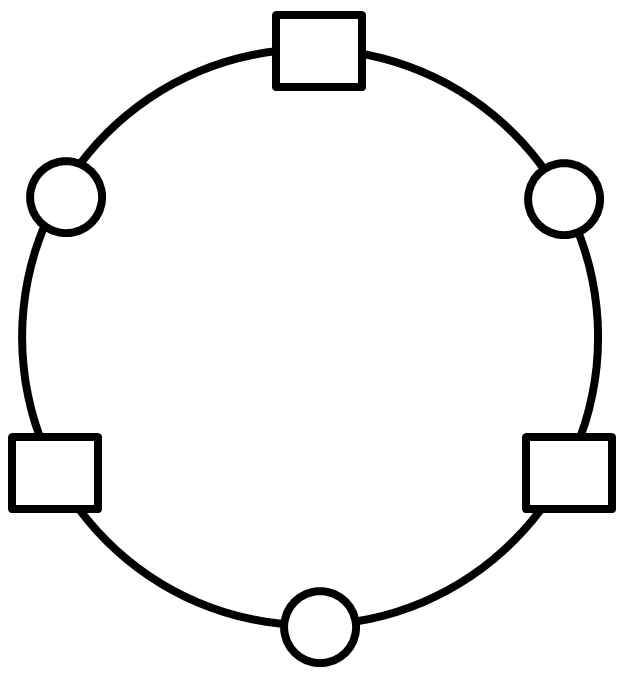}
  \label{fig:cycle_sub1}
\end{subfigure}%
\hspace{5mm}
\begin{subfigure}{(b)}
  \centering
  \includegraphics[scale=0.15]{./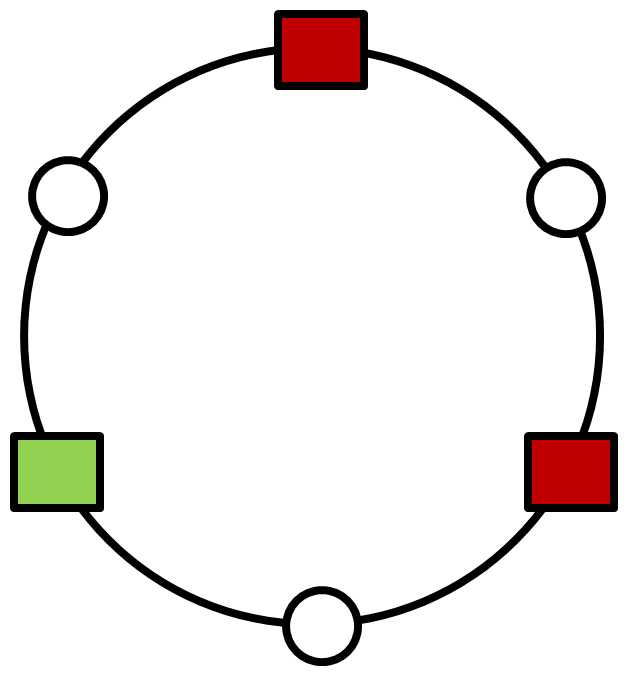}
  \label{fig:cycle_sub2}
\end{subfigure}

\begin{subfigure}{(c)}
  \centering
  \includegraphics[scale=0.15]{./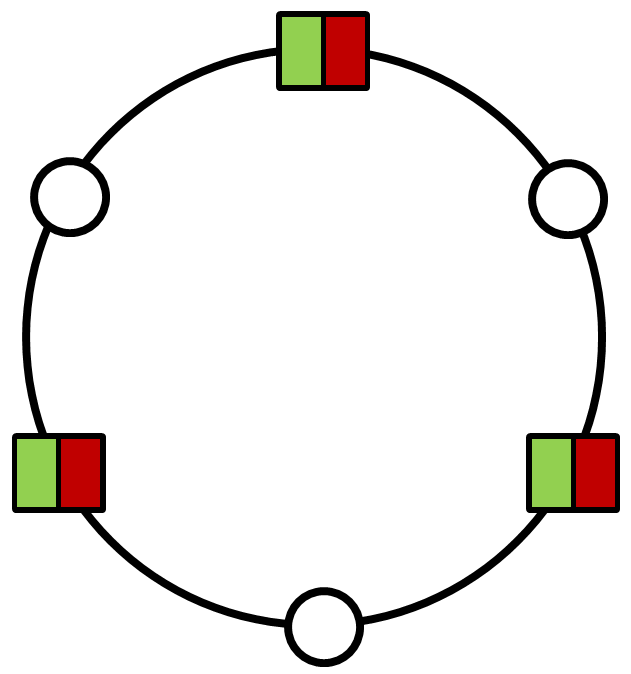}
  \label{fig:cycle_sub3}
\end{subfigure}%
\hspace{5mm}
\begin{subfigure}{(d)}
  \centering
  \includegraphics[scale=0.15]{./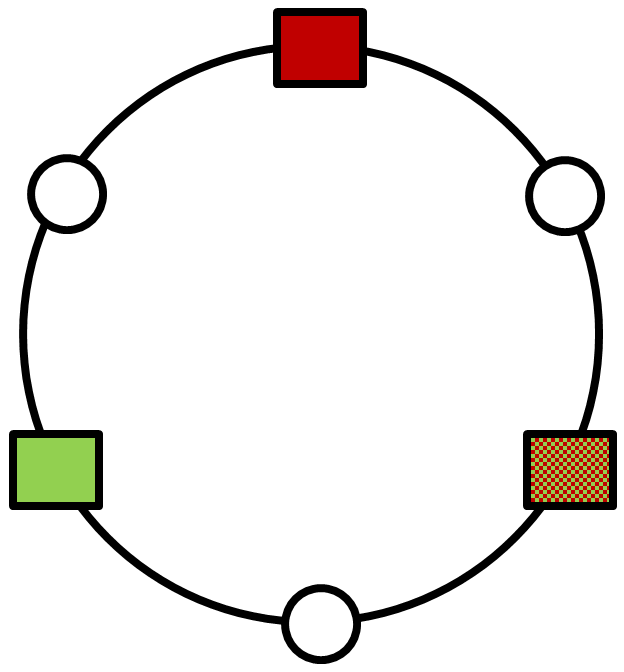}
  \label{fig:cycle_sub4}
\end{subfigure}
\caption{(a) A network of three users connected to three caches forming a cycle. (b) Optimal content placement according to binary placement decisions, \ie\ $x_{jm}\in \{0, 1\}$. (c) Optimal content placement assuming fractions of files can be stored in caches, \ie\ $0\le x_{jm}\le 1$. (d) Optimal content placement with the possibility of content coding.}
\label{fig:cycle}
\end{figure}

Although the problem of joint caching and routing is NP-complete in general, in the previous subsections we showed that several non-trivial special cases of the problem can be solved in polynomial time. In this section, we discuss the potential cause of complexity of the problem.

By relaxing the integer constraints on content placement variables, $x_{jm}$, and allowing them to take real values, \ie\ $0 \le x_{jm} \le 1$, we obtain another problem that is generally referred to as the ``relaxation'' of problem~\eqref{eq:opt_simpl_constant}. Since the objective function in~\eqref{eq:opt_simpl_constant} is convex, the solution to the \emph{relaxed problem} can be found in polynomial time for all instances of the problem.

By comparing the solutions to the integer and the relaxed problems for a large number of instances of the optimization problem in~\eqref{eq:opt_simpl_constant} we discuss what we believe is the root cause of the complexity of the joint caching and routing problem in the case of congestion-insensitive uncached path.
We observe via numerical evaluations that for most instances of the problem, solutions to problem~\eqref{eq:opt_simpl_constant} match those of the relaxed problem. Those instances that result in different solutions to the two problems exhibit a certain structure that we explain here.

Consider a network with three users and three caches as depicted in Figure~\ref{fig:cycle_network}. With each user connected to two of the caches, the user-cache connections can be seen to form a cycle as demonstrated in  Figure~\ref{fig:cycle}a. Assume all paths from users to caches have equal hit and miss delays. Also, assume that each cache has the capacity of storing one file, and that all three users are interested in two files, noted here as green and red. 

Solving the optimization problem~\eqref{eq:opt_simpl_constant} for the above network, the optimal content placement is to replicate one of the files in two of the caches, and have one copy of the other file in the third cache, as shown in Figure~\ref{fig:cycle}b. With this content placement, two of the users get both files from their cached paths, and the third user gets only one file from cached paths, and has to use the uncached path to get the other file.
The solution to the \emph{relaxed} optimization problem however would be to store half of each file in each cache, \ie\ $x_{1m}~=~x_{2m}~=~0.5$, which achieves strictly smaller average delay. This solution is illustrated in Figure~\ref{fig:cycle}c\footnote{Note that we do not consider the solution of the relaxed problem as a legitimate content placement. Although it looks like all users can access the two files via the caches in Figure~\ref{fig:cycle}c, when splitting the files in halves, two of the caches will store the same half copy of a file,   and the user connected to those caches will only get half of that file from the caches and still needs to use the uncached path for the other half. However, we acknowledge that with the possibility of coding, content placement can be done in such a way that users can get both files from caches, as is shown in Figure~\ref{fig:cycle}d. We are not considering coded content placement in this work.}.

The above discussion shows how the solution to optimization problem~\eqref{eq:opt_simpl_constant} differs from its relaxed counterpart for the network shown in Figure~\ref{fig:cycle_network}. It can easily be seen that matrix $A$ corresponding to the linear constraints $A x \le b$ for this network is \emph{not} totally unimodular.

Such mismatch between the solutions of the optimization problem~\eqref{eq:opt_simpl_constant} and the corresponding relaxed problem, is also observed for larger networks that contain odd number of users and odd number of caches connected in a way that form a cycle. It is easy to show for all these networks that the matrix $A$ of the constraints is \emph{not} totally unimodular. 

We conjecture that these cycles are \emph{the} source of complexity in the problem of joint caching and routing, and for networks that do not have any such cycles the solution to the optimization problem~\eqref{eq:opt_simpl_constant}  matches that of the relaxed problem. More specifically we have the following conjecture:

\begin{conjecture}
The optimal solution to the problem of joint caching and routing can be found
in polynomial time if there are no cycles of length $4k+2, k \ge 1$
in the bipartite graph corresponding to the user-cache connections. 
\end{conjecture}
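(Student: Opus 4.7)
The plan is to establish the conjecture by showing that, under the no-$(4k+2)$-cycle hypothesis, the constraint matrix $A$ of the LP relaxation of~\eqref{eq:opt_simpl_constant} is totally unimodular. Combined with Proposition~\ref{prop:integral}, this yields an integral LP optimum, so the joint caching and routing problem can then be solved in polynomial time by standard linear programming. Total unimodularity will be verified via the signing criterion in Proposition~\ref{prop:TU}.

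Write $G_{uc}$ for the user--cache bipartite graph (with an edge whenever $a_{im}=1$). I would label the rows of $A$ in three classes: $R_1(i,j)$ for $\sum_m p_{ijm}\le 1$, $R_2(m)$ for $\sum_j x_{jm}\le C_m$, and $R_3(i,j,m)$ for $p_{ijm}-x_{jm}\le 0$, the last one defined only for edges $(i,m)$ of $G_{uc}$. The columns are either $p$-columns $p_{ijm}$ or $x$-columns $x_{jm}$. A key structural observation is that each $p$-column contains entries from at most two rows (one $R_1$ and one $R_3$), while each $x$-column contains the single entry of $R_2(m)$ together with one $R_3$-entry for every user connected to cache $m$.

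Given any row subset $R$, I would build the $\pm 1$ signing in two stages. First, for each $p$-column both of whose rows appear in $R$, the only way the signed column sum lies in $\{0,\pm 1\}$ is to assign the two rows opposite signs, yielding the rigid rule $s(R_3(i,j,m))=-s(R_1(i,j))$. Substituting this rule into the $x$-columns, the column sum for $x_{jm}$ becomes $s(R_2(m)) + \sum_i s(R_1(i,j))$, where the inner sum runs over users $i$ adjacent to $m$ with both relevant rows in $R$. The condition that this quantity lies in $\{0,\pm 1\}$ is a parity-style constraint coupling the routing signs at the users attached to $m$. Processing one file at a time and reducing to connected components of the induced subgraph of $G_{uc}$, the existence of a consistent signing becomes a 2-coloring question on an auxiliary graph derived from $G_{uc}$, which is solvable precisely when no "bad'' cycle obstructs the coloring.

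The hard part is the parity analysis identifying the bad cycles as exactly those of length $4k+2$ in $G_{uc}$. Walking around a cycle $u_1\,m_1\,u_2\,m_2\cdots u_\ell\,m_\ell\,u_1$, the sign propagation couples the signs at consecutive users through each intermediate cache, and a careful count should show that cycles with $\ell$ even (length $4k$) impose no obstruction, while cycles with $\ell$ odd (length $4k+2$) force a sign contradiction. The argument must additionally handle (i) compatibility across multiple files, since the single sign $s(R_2(m))$ must serve all $x_{jm}$ columns simultaneously, and (ii) partial cycles, in which $R$ omits some rows so that the cycle is effectively "broken'' and a signing still exists. The converse direction, confirming that the hypothesis is tight, should follow by exhibiting for each length-$(4k+2)$ cycle an explicit $\pm 1$-minor of absolute determinant $2$, generalizing the three-user, three-cache obstruction illustrated in Figure~\ref{fig:cycle_network}.
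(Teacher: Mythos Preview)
The statement you are attempting to prove is labeled \emph{Conjecture} in the paper, and indeed the paper provides \emph{no proof} for it. The authors motivate it by the three-user/three-cache example and by numerical observations, but they leave it open. So there is no ``paper's own proof'' to compare against; you are proposing to settle an open question in the paper.

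Your plan is the natural one: generalize the total-unimodularity argument that the paper does carry out for the two-cache case (Appendix~\ref{sec:appndx_two_cache}) by using the signing criterion of Proposition~\ref{prop:TU}. The structural bookkeeping you set up (the three row classes, the forced rule $s(R_3(i,j,m))=-s(R_1(i,j))$ on $p$-columns, and the resulting parity condition on $x$-columns) is correct as far as it goes. However, what you have written is a plan, not a proof, and the parts you flag as ``the hard part'' are genuinely hard and not obviously true:
\begin{itemize}
\item The cross-file compatibility issue is more delicate than you indicate. The single sign $s(R_2(m))$ must simultaneously satisfy the $x_{jm}$-column constraint for \emph{every} file $j$, and for each $j$ the set of contributing $R_3$ rows (and whether their partner $R_1$ rows are also in $R$) can be completely different. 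Your reduction to ``a 2-coloring question on an auxiliary graph derived from $G_{uc}$'' glosses over the fact that this auxiliary object depends on $R$ and on the file index; you have not argued that all of these coloring problems can be solved consistently.
\item The claimed dichotomy (length-$4k$ cycles impose no obstruction, length-$(4k+2)$ cycles force a contradiction) is asserted but not derived. In the two-cache proof the paper exploits a very concrete $2\times 2$ block structure; with $M\ge 3$ caches the propagation of signs around a cycle interacts with the degrees of the caches along the cycle and with the rows of $R$ that attach to those caches from \emph{outside} the cycle. A clean parity count does not obviously survive these interactions.
\end{itemize}
Finally, keep in mind that total unimodularity is sufficient but not necessary for the LP relaxation to have an integral optimum. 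Even if your TU route fails on some cycle-free instance, the conjecture could still be true via a different integrality argument (e.g., showing the polyhedron is integral, or exhibiting a direct combinatorial algorithm). Conversely, proving TU under the hypothesis would certainly establish the conjecture, but at present your proposal identifies the right obstacles without resolving them.
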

\section{Congestion-Sensitive Uncached Path}
\label{sec:mm1}
Next, we consider the case where delays on the uncached path depend on the traffic load on the back-end server. We model the uncached path as an M/M/1 queue with service rate $\mu$. In addition to the queuing delay, we assume that user $i$ observes an initial access delay to uncached path with average $\duc, i=1,\ldots, N$. Here, we make no assumptions regarding $\duc$ and $\dhit$. Note that if $\dhit \le \duc$ and the needed object is in the cache user $i$ will direct all her requests for that object to cache. If $\dhit > \duc$, however, even if the needed content is in cache $m$, the user may prefer to use the uncached path, depending on the service rate and the load on the back-end server. For a given content placement $\mathbf{x}$ and routing policy $\mathbf{p}$, the average delay can be written as
\begin{align}
\label{eq:mm1_delay}
D(\mathbf{x}, \mathbf{p}) &= \frac{1}{\lambda}\Biggl[\sum_i\sum_j\lambda_i q_{ij} \Biggl( \sum_m{p_{ijm}x_{jm}\dhit} \notag\\
&+ \sum_m{(1-x_{jm})p_{ijm}\dmiss} + (1-\sum_m{p_{ijm}}) \duc \Biggr) \notag\\
&+ \frac{\sum_i\sum_j\lambda_i q_{ij} (1-\sum_m{p_{ijm}})}{\mu - \sum_i\sum_j\lambda_i q_{ij} (1-\sum_m{p_{ijm}})} \Biggr].
\end{align}

\subsection{Hardness of General Case}
Note that we can consider the congestion-insensitive delay model as a special case of the congestion-sensitive model where $\mu = +\infty$. This explains why this problem is NP-complete in general. In the remainder of this section, however, we will prove that the problem of joint caching and routing in the case of a congestion-sensitive delay model remains NP-complete even if there is only one cache in the network and each content is of interest to no more than one user.

\subsection{Hardness of Single-Cache Case}
Modifying the delay function $D(x, p)$ in~\eqref{eq:mm1_delay} for the case of one cache, \ie\ $M=1$, and assuming each user is interested in only one file, \ie\ $q_{ii} = 1, \forall{i}$, we can re-write the optimization problem as
\begin{equation}
\label{eq:single_cache}
\begin{aligned}
\text{minimize } \quad & \frac{1}{\lambda} \left[\sum_{i=1}^{N}{\lambda_i x_i p_i d_i^h} + \sum_{i=1}^{N}{ \lambda_i (1-x_i)p_i d_i^m} \right. \\
& \left. \quad + \sum_{i=1}^{N}{\lambda_i (1-p_i) \duc} + \frac{\sum_{i=1}^{N}{ \lambda_i (1-p_i) }}{\mu - \sum_{i=1}^{N}{ \lambda_i (1-p_i) } } \right] \\
\text{such that } \quad &\sum_{i=1}^{N}{x_i} \le C \\
&0 \le p_i \le a_i \\
&x_i \in \{0, 1\},
\end{aligned}
\end{equation}
where $p_i = p_{ii1}$ denotes the probability that user $i$ will use the uncached path. Also, $a_i$ denotes whether user $i$ is connected to the cache.

To show that the above optimization problem is NP-complete, we consider the corresponding
decision problem, Congestion Sensitive Delay Decision Problem (CSDDP).

\begin{problem}
(Congestion Sensitive Delay Decision Problem) Let $\mathbf{\Lambda} = [\lambda_1, \lambda_2, \ldots, \lambda_N]$ denote the request rates of users, and let $\mathbf{d}^h = [d_i^h]$, $\mathbf{d}^m = [d_i^m]$ and $\mathbf{d}^b = [d_i^b]$ denote the hit delay, miss delay and initial delay of uncached path, respectively. Also, let $\mu$ be the service rate of the back-end server, and $C$ be the cache capacity.\\
We are asking the following question: given the parameters $(\mu, \mathbf{\Lambda}, \mathbf{d}^h, \mathbf{d}^m, \mathbf{d}^b, C)$ and a real number $d$, is there any assignment of $\mathbf{x} = [x_i]$ and $\mathbf{p} = [p_i]$ such that $D(\mathbf{x}, \mathbf{p}) \le d$.
\end{problem}

It is clear that for any given content placement $\mathbf{x}$ and routing policy $\mathbf{p}$ 
the answer to CSDDP can be verified in polynomial time, and hence CSDDP is in class NP.
To prove that CSDDP is NP-hard, we use the fact that the following problem is NP-hard.

\begin{problem}
(Equal Cardinality Partition) Given a set $A$ of $n$ numbers, can $A$ be partitioned
into two disjoint subsets $A_1$ and $A_2$ such that $A=A_1\cup A_2$, the sum of the numbers in $A_1$
equals the sum of the numbers in $A_2$ and that $|A_1|=|A_2|$?
\end{problem}

\begin{lemma}
ECP is NP-hard.
\end{lemma}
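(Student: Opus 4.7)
The plan is to reduce from the classical Partition problem (NP-hard by Karp), in which one asks whether a multiset $B = \{b_1,\dots,b_n\}$ of positive integers summing to $2S$ can be split into two subsets each of sum $S$.  Given such an instance of Partition, I will construct in polynomial time an instance of ECP whose yes-answer is equivalent to the yes-answer of the original Partition instance.

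The construction pads $B$ with $n$ dummy elements so that the new set has $2n$ elements and any equal-cardinality split forces the sums coming from $B$ on each side to coincide.  A convenient choice is
\[
A \;=\; \{b_1+1,\,b_2+1,\,\ldots,\,b_n+1\}\;\cup\;\{\underbrace{1,1,\ldots,1}_{n \text{ copies}}\},
\]
so that $|A|=2n$, $\sum A = 2S+2n$, and any ECP solution partitions $A$ into two blocks of size $n$ each summing to $S+n$.  (If the problem admits $0$ as an input number, the even simpler padding $A=B\cup\{0\}^n$ achieves the same effect; the ``$+1$'' version keeps every entry strictly positive.)  The construction is clearly polynomial in the bit length of $B$.

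Forward direction: from any Partition $(B_1,B_2)$ of $B$ with $|B_1|=k$, I form $A_1$ by combining the $k$ ``tagged'' elements $\{b_i+1:b_i\in B_1\}$ with $n-k$ of the dummy $1$'s; the sums on both sides come out to $S+n$ and the cardinalities to $n$.  Backward direction: given an ECP solution $(A_1,A_2)$, let $B_1\subseteq B$ be the preimage of the tagged elements in $A_1$ and $k=|B_1|$; then $A_1$ contributes $k$ tags and necessarily $n-k$ dummies, so $\sum A_1 = \sum B_1 + k + (n-k) = \sum B_1 + n$, and $\sum A_1 = S+n$ forces $\sum B_1=S$, giving a Partition of $B$.

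The only subtlety — and the main thing to get right — is to design the padding so that the cardinality and sum constraints decouple: each dummy contributes the same amount to whichever block it lands in, so the cardinality equation alone pins down how many dummies land on each side, and the sum equation then reduces cleanly to the original Partition condition on $B$.  Arbitrary padding (for example, a few very large sentinel values intended to balance cardinalities) would risk coupling the two constraints and introducing spurious solutions, which is what this design avoids.
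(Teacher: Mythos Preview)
Your proof is correct and follows essentially the same approach as the paper: reduce from Partition by padding the $n$-element instance with $n$ identical dummy values so that any equal-cardinality split of the $2n$ elements forces the original numbers to be split with equal sums. The paper simply pads with $n$ zeros (it allows zero entries in ECP), whereas you shift by $+1$ and pad with ones to keep all entries positive --- a cosmetic variant you yourself note. Your write-up is more explicit about the two directions than the paper's one-line ``it is easy to see,'' but the underlying reduction is the same.
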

\begin{proof}%[Proof of Lemma 1]
A proof of NP-hardness of a more general form of ECP is given in~\cite{Cieliebak08}.
Here, we give a simpler proof by a reduction from the Partition problem.

\begin{problem}
(Partition) Given a set $A$ of $n$ positive integers, can $A$ be partitioned
into two disjoint subsets $A_1$ and $A_2$ such that $A = A_1\cup A_2$ and the sum of the numbers
in $A_1$ equals the sum of the numbers in $A_2$?
\end{problem}

For each instance of Partition with input $A = \{a_1, \ldots, a_n\}$ create an instance
$A' = \{a_1, \ldots, a_n, 0, \ldots, 0\}$ by adding $n$ zeros to $A$.
It is easy to see that $A'$ can be partitioned into two subsets with equal cardinality
if and only if $A$ can be partitioned. Therefore, Partition $\le_P$ ECP, and ECP is NP-hard.
\end{proof}

\begin{lemma}
CSDDP is NP-Complete.
\end{lemma}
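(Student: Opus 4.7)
CSDDP is in NP, since for any witness $(\mathbf{x},\mathbf{p})$ the value $D(\mathbf{x},\mathbf{p})$ in~\eqref{eq:mm1_delay} can be evaluated and compared against $d$ in polynomial time. The plan for NP-hardness is to give a polynomial-time reduction from ECP, just shown to be NP-hard.

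Given an ECP instance $A=\{a_1,\ldots,a_n\}$ with $T=\sum_i a_i$, I would construct a single-cache, one-file-per-user CSDDP instance as follows. Introduce $n$ users, each interested in a distinct file, and assign request rate $\lambda_i=a_i+L$ where $L$ is a large positive constant; let $T':=\sum_i\lambda_i=T+nL$. Let every user be connected to the single cache, set the capacity to $C=n/2$, and pick the service rate $\mu$ slightly above $T'/2$ so that the queue is stable whenever roughly half the aggregate rate is routed to it. The delays $d_i^h,d_i^m,d_i^b$ and the threshold $d$ are then calibrated so that, after minimizing over the routing vector $\mathbf{p}$ at each fixed placement $\mathbf{x}$, the residual objective is a strictly convex function of the cached rate-sum $S_1=\sum_i\lambda_i x_i$ whose unique minimizer is $S_1=T'/2$, with minimum value exactly $d$. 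The shift by $L$ plays the same role as the zero-padding used in the Partition-to-ECP reduction: for any $L$ larger than $T$ (still polynomial in the input), the only subsets of $\{\lambda_i\}$ whose sum equals $T'/2$ have cardinality exactly $n/2$, because the slack contributed by the $a_i$ terms is strictly smaller than $L$.

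With this construction in place, the equivalence is short in each direction. If ECP answers yes with partition $A_1,A_2$, caching the items in $A_1$ produces $S_1=T'/2$ and hence $D\le d$. Conversely, if $(\mathbf{x},\mathbf{p})$ attains $D\le d$, strict convexity of the reduced objective forces $S_1=T'/2$, and then the augmentation forces $|\{i:x_i=1\}|=n/2$; subtracting the common shift $L$ recovers an $n/2$-element subset of $A$ summing to $T/2$, i.e., an ECP partition.

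The delicate step, and the one I expect to be the main obstacle, is the calibration of delay parameters in the last sentence of the construction. Because each $p_i\in[0,1]$ is continuous, a cached user may in principle divert part of its traffic onto the uncached path to relieve the queue; when such a split is optimal, the reduced objective can flatten into a plateau on one side of $S_1=T'/2$ and destroy the strict-minimum property needed for completeness. I plan to eliminate this by choosing $d_i^h$ small enough (relative to $d_i^b$ and $1/(\mu-T'/2)$) that every cached user strictly prefers the cache path at every feasible queue load, collapsing the optimal routing to $p_i=x_i$ and reducing the problem to a purely combinatorial choice of $\mathbf{x}$. If this collapse is incompatible with placing the minimum of the convex queueing term at exactly $T'/2$, the remedy is to introduce a small set of auxiliary users disconnected from the cache (i.e., with $a_{im}=0$) that inject a tunable constant rate into the queue; this shifts the balance point of $\lambda_q/(\mu-\lambda_q)$ to $T'/2$ independently of the routing-incentive inequality. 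Once the routing is pinned in this way, the objective becomes a linear function of $S_1$ plus the strictly convex function $(T'-S_1)/(\mu-(T'-S_1))$, and the sharp minimum at $S_1=T'/2$ is obtained by matching the slope of the linear term to the derivative of the queue term at that point.
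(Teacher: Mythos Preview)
Your reduction has a genuine gap that your proposed remedies do not close. With uniform $d_i^h=h$, $d_i^b=b$, and $d_i^m=+\infty$, the delay after optimizing over $\mathbf{p}$ is a function of $S_1=\sum_i\lambda_ix_i$ alone. For this function to have an interior minimum at $S_1=T'/2$ you must match slopes, which forces $h-b=\mu/(\mu-T'/2)^2>0$; but this is exactly the first-order condition for cached users to be indifferent between cache and queue when the queue load equals $T'/2$. Consequently, for any placement with $S_1>T'/2$ the optimal routing diverts $S_1-T'/2$ units of cached traffic back to the queue, pinning the queue load at $T'/2$ and the total delay at the constant value $(h+b)T'/2+\frac{T'/2}{\mu-T'/2}$. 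The optimized delay therefore equals $d$ for every $S_1\ge T'/2$, not only at $S_1=T'/2$. Since caching the $n/2$ largest-rate items always yields $S_1\ge T'/2$, your CSDDP instance answers YES regardless of the ECP input, and the reduction is vacuous. Choosing $h<b$ instead makes the optimized delay strictly decreasing in $S_1$ (again always YES), and auxiliary disconnected users merely translate the plateau threshold without removing it. The incompatibility you anticipated is therefore structural and cannot be repaired within a uniform-delay construction.

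The paper sidesteps this by making $d_i^b$ user-dependent: with $\lambda_i=a_i$, $\mu=S=\sum_ia_i$, $d_i^h=4/S$, $d_i^b=4/a_i$, $d_i^m=+\infty$, and $C=n/2$, one has $d_i^h<d_i^b$ for every $i$, so the collapse $p_i=x_i$ holds unconditionally. The objective then becomes $\tfrac{1}{S}\bigl[(4/S)S_1 + 4(n-|I|) + S/S_1 - 1\bigr]$, a function of two separate quantities: the cardinality $|I|=\sum_ix_i$ and the cached rate-sum $S_1=\sum_ia_ix_i$. The term $4(n-|I|)$ is minimized at $|I|=n/2$, while $(4/S)S_1+S/S_1\ge 4$ by AM--GM with equality iff $S_1=S/2$; both minima hold simultaneously iff ECP has a YES answer. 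The non-uniform choice $d_i^b\propto 1/\lambda_i$ is precisely the device that decouples cardinality from sum, and it is the ingredient missing from your construction.
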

\begin{proof}
See Appendix~\ref{sec:mm1_np} for a detailed proof.
\end{proof}

Although this problem is NP-complete even in a very restrictive case with one cache and each user requesting one file, in the next section we show that a greedy algorithm can find approximate solutions with guaranteed performance.
\section{Approximation Algorithms} 
\label{sec:approx}

In this section, we show that the problem of joint caching and routing (for both congestion-insensitive and congestion-sensitive delay models)
can be formulated as the maximization of a monotone submodular function subject to
matroid constraints. This enables us to devise algorithms
with provable approximation guarantees.

We first review the definition and properties of matroids~\cite{nemhauser88}, and monotone~\cite{bartle76} and submodular~\cite{schrijver03} functions, and then prove our problem can be formulated as the maximization of a monotone submodular function subject to
matroid constraints.

\begin{definition}
A matroid $M$ is a pair $M = (S, I)$, where $S$ is a finite set and $I \subseteq 2^S$
is a family of subsets of $S$ with the following properties:
\begin{enumerate}
\item $\O \in I$,
\item $I$ is downward closed, \ie\ if $Y \in I$ and $X \subseteq Y$, then $X\in I$,
\item If $X, Y\in I$, and $|X| < |Y|$, then $\exists y\in Y\backslash X$ such that
$X\cup\{y\}\in I$.
\end{enumerate}
\end{definition}

\begin{definition}
Let $S$ be a finite set. A set function $f:2^S\rightarrow\mathbb{R}$ is submodular
if for every $X, Y \subseteq S$ with $X\subseteq Y$ and every $x\in S\backslash Y$ we have
\[f(X\cup\{x\}) - f(X) \ge f(Y\cup\{x\}) - f(Y).\]
\end{definition}

\begin{definition}
A set function $f$ is monotone increasing if $X\subseteq Y$ implies that $f(X) \le f(Y)$.
\end{definition}

Let $X_m$ denote the set of files stored in cache $m$, and define $X~=~X_1\cup X_2\cup\ldots\cup X_M$ to be the set of files stored in the $M$ caches. $X$ is the set equivalent of the binary content placement $\mathbf{x}$ defined in~\eqref{eq:main_opt}. Note that $|X_m| \le C_m$ 

%Let $X$ denote the set of files stored in the $M$ caches. Defined in this way, $X$ is then a set equivalent of the binary content placement strategy $\mathbf{x}$ defined in~\eqref{eq:main_opt}.

Let $S_m = \{s_{1m}, s_{2m}, \ldots, s_{Km}\}$ denote the set of all possible files that could be placed in cache $m$ where $s_{jm}$ denotes the storage of file $j$ in cache $m$.
The set element $s_{jm}$ corresponds to the binary variable $x_{jm}$ defined in the optimization problem~\eqref{eq:main_opt} such that $x_{jm} = 1$ if and only if the element $s_{jm}\in X$.
Define the super set $S~=~S_1\cup S_2\cup\ldots\cup S_M$ as the set of all possible content placements in the $M$ caches. We have the following lemma.

\begin{lemma}
The constraints in~\eqref{eq:main_opt} form a matroid on~$S$.
\end{lemma}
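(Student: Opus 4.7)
The plan is to recognize the matroid here as a standard \emph{partition matroid}. The ground set $S$ is naturally partitioned into the disjoint blocks $S_1,\ldots,S_M$ (disjoint because each element $s_{jm}$ carries both a file index $j$ and a cache index $m$, so no element belongs to two blocks). The capacity constraint $\sum_j x_{jm}\le C_m$ from~\eqref{eq:main_opt} translates, under the bijection $x_{jm}=1 \Leftrightarrow s_{jm}\in X$, into the block-wise cardinality bound $|X\cap S_m|\le C_m$ for every $m$. I would therefore define
\[
I \;=\; \bigl\{\, X\subseteq S \;:\; |X\cap S_m|\le C_m,\;\forall m\in\{1,\ldots,M\}\,\bigr\},
\]
and verify that $(S,I)$ satisfies the three matroid axioms.

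Axioms (1) and (2) are essentially immediate. For (1), the empty set satisfies $|\emptyset\cap S_m|=0\le C_m$ for every $m$, so $\emptyset\in I$. For (2), if $Y\in I$ and $X\subseteq Y$, then $|X\cap S_m|\le |Y\cap S_m|\le C_m$ for each $m$, giving $X\in I$.

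The only step requiring a small argument is the exchange axiom (3). Suppose $X,Y\in I$ with $|X|<|Y|$. Because the $S_m$ partition $S$, we can decompose cardinalities as $|X|=\sum_m |X\cap S_m|$ and $|Y|=\sum_m|Y\cap S_m|$. The strict inequality $|X|<|Y|$ then forces at least one index $m^\star$ with $|Y\cap S_{m^\star}|>|X\cap S_{m^\star}|$. In particular $(Y\cap S_{m^\star})\setminus(X\cap S_{m^\star})$ is nonempty, so I can pick an element $y$ from it; this $y$ lies in $Y\setminus X$. Adding $y$ to $X$ increases only the $m^\star$-block count by one, so $|(X\cup\{y\})\cap S_{m^\star}|=|X\cap S_{m^\star}|+1\le |Y\cap S_{m^\star}|\le C_{m^\star}$, while every other block is unchanged and still satisfies its capacity. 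Hence $X\cup\{y\}\in I$, completing axiom (3).

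I do not expect a genuine obstacle here: the result is really the observation that ``one capacity bound per cache'' is the defining shape of a partition matroid. The only subtlety worth stating explicitly in the write-up is that the routing variables $p_{ijm}$ in~\eqref{eq:main_opt} are not part of the ground set $S$; the matroid encodes exactly the combinatorial (caching) portion of the constraints, which is precisely what is needed to plug into the standard monotone-submodular-maximization-under-matroid-constraints framework used in the subsequent $(1-1/e)$ approximation.
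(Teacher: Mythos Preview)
Your proposal is correct and takes essentially the same approach as the paper: both identify the independence family $I=\{X\subseteq S:|X\cap S_m|\le C_m,\forall m\}$ as a partition matroid on the disjoint blocks $S_1,\ldots,S_M$. The paper simply asserts that $(S,I)$ is a matroid (treating the partition-matroid structure as well known) and remarks that the routing variables can be optimized separately, whereas you spell out the verification of the three axioms explicitly; the underlying argument is the same.
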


\begin{proof}
For a given content placement $\mathbf{x}$, the optimal routing
policy can be computed in polynomial time since
$D(\mathbf{p})~=~D(\mathbf{p}; \mathbf{x})$ is a convex function. With that in
mind, we can write the average delay as a function of the
content placement $X\subseteq S$. Thus, the constraints
on the capacities of the caches can be expressed as $X\subseteq \mathcal{I}$ where
\[\mathcal{I} = \{X\subseteq S: |X\cap S_m| \le C_m, \forall m=1,\ldots, M\}.\]
Note that $(S, \mathcal{I})$ defines a matroid.
\end{proof}

Let $d_{ij}(\mathbf{x})$ denote the minimum average delay for user $i$ accessing file $j$ through a cached path, given content placement $\mathbf{x}$. We have
\[d_{ij}(\mathbf{x}) = \min_{m}{d_{ijm}},\]
where $d_{ijm}$ denotes the average delay of accessing content $j$ from cache $m$ defined as ($x_{jm}$ indicates that file $j$ is in cache $m$)
\[d_{ijm} = \dhit x_{jm} + \dmiss (1-x_{jm}).\]

Given the content placement in the caches, let $p_{ij}$ denote the fraction of the traffic for which user $i$ uses the cached paths to access content $j$. We can re-write the delay functions~\eqref{eq:opt_constant} and~\eqref{eq:mm1_delay} for the congestion-insensitive and the congestion-sensitive models as
\begin{equation*}
\begin{aligned}
D(\mathbf{p}; \mathbf{x}) &= \frac{1}{\lambda}\left(\sum_{i, j}{\lambda_i q_{ij} p_{ij} d_{ij}(\mathbf{x})} + \sum_{i, j}{\lambda_i q_{ij} (1-p_{ij}) \duc} \right),
\end{aligned}
\end{equation*}
and
\begin{equation*}
\begin{aligned}
D(\mathbf{p}; \mathbf{x}) &= \frac{1}{\lambda}\left[\sum_{i, j}{\lambda_i q_{ij} p_{ij} d_{ij}(\mathbf{x})} + \sum_{i, j}{\lambda_i q_{ij} (1-p_{ij}) \duc} \right.\\
&\qquad \left.+ \frac{\mu}{\mu - \sum_{i,j}{\lambda_i q_{ij} (1-p_{ij})}} - 1 \right],
\end{aligned}
\end{equation*}
respectively. The optimal routing policy for a given content placement~$\mathbf{x}$, then, is one that maximizes $-D(\mathbf{p}; \mathbf{x})$, and can be found by solving the following optimization problem:
\begin{equation}
\label{eq:optimization}
\begin{aligned}
\text{maximize}& \quad -D(\mathbf{p}; \mathbf{x}) \\
%\text{such that} \quad p_{ij} \le 1 &\quad \forall{i,j} \\
%& &\quad -p_{ij} \le 0& &\quad \forall{i,j}
\text{such that}& \quad 0 \le p_{ij} \le 1 \quad \forall{i,j} \\
\end{aligned}
\end{equation}

Let $\mathbf{x}_X$ be the equivalent binary representation of the content placement set $X$. We have the following lemma for both congestion-insensitive and congestion-sensitive delay models:
\begin{lemma}
\label{lem:F}
Let $\mathcal{P}$ denote all routing policies. For $X\subseteq S$, the function $F(X) = \max_{\mathbf{p}\in \mathcal{P}}{ -D(\mathbf{p}; \mathbf{x}_X) }$ is a monotone increasing and submodular function.
\end{lemma}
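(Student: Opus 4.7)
The plan is to establish monotonicity and submodularity separately, the first by a coordinatewise comparison and the second via the following key auxiliary claim: for every \emph{fixed} routing policy $\mathbf{p}\in\mathcal{P}$, the set function $\phi_{\mathbf{p}}(X) := -D(\mathbf{p};\mathbf{x}_X)$ is monotone submodular in $X$. Given this claim, the CI case admits an immediate closed-form proof, while the CS case needs an additional swap argument that I expect to be the main technical obstacle.

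\textbf{Monotonicity.} For $X\subseteq Y$ I would observe that $\mathbf{x}_X\le\mathbf{x}_Y$ componentwise, and since $\dhit\le\dmiss$ the per-cache delay $\dhit x_{jm}+\dmiss(1-x_{jm})$ is componentwise nonincreasing in $\mathbf{x}$, hence $d_{ij}(\mathbf{x}_X)\ge d_{ij}(\mathbf{x}_Y)$. In both delay models the $\mathbf{x}$-dependent part of $D(\mathbf{p};\mathbf{x}_X)$ is the nonnegative-weighted sum $\tfrac{1}{\lambda}\sum_{i,j}\lambda_i q_{ij}p_{ij}d_{ij}(\mathbf{x}_X)$, since the queueing term depends on $\mathbf{p}$ only. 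Therefore $-D(\mathbf{p};\mathbf{x}_X)\le -D(\mathbf{p};\mathbf{x}_Y)$ for every fixed $\mathbf{p}$, and taking $\max_{\mathbf{p}\in\mathcal{P}}$ of both sides yields $F(X)\le F(Y)$.

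\textbf{Auxiliary claim and CI case.} To show $\phi_{\mathbf{p}}$ is submodular for fixed $\mathbf{p}$, I would use $-\min(a,b)=-b+(b-a)^{+}$ and set $D^m_i=\min_{m':a_{im'}=1}d^m_{im'}$, obtaining $-d_{ij}(\mathbf{x}_X) = -D^m_i + \max_{m\in T(X,i,j)}(D^m_i-\dhit)^{+}$ where $T(X,i,j)=\{m:s_{jm}\in X,\,a_{im}=1\}$. The rightmost expression is a weighted max-coverage function of $X$, which is well-known to be monotone submodular, so $\phi_{\mathbf{p}}$ is a nonnegative weighted sum of such functions (plus $X$-independent terms) and is itself monotone submodular. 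For the CI model this is already enough: the optimal routing has the pointwise closed form $p^*_{ij}(X)=1$ iff $d_{ij}(\mathbf{x}_X)<\duc$ and $0$ otherwise, giving $F(X)=\mathrm{const}+\tfrac{1}{\lambda}\sum_{i,j}\lambda_i q_{ij}\max_{m\in T(X,i,j)}(\duc-\dhit)^{+}$, again a weighted max-coverage form.

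\textbf{CS case (main obstacle).} A pointwise maximum of submodular functions is not submodular in general, so the CS case is the real difficulty. My plan is a swap argument: for $X\subseteq Y$ and $s\notin Y$, let $\widehat{\mathbf{p}}$ be the $X$-optimizer and $\mathbf{p}^{\star}$ be the $(Y\cup\{s\})$-optimizer. Using $\widehat{\mathbf{p}}$ as a feasible (suboptimal) routing at $X\cup\{s\}$ gives $F(X\cup\{s\})-F(X)\ge\phi_{\widehat{\mathbf{p}}}(X\cup\{s\})-\phi_{\widehat{\mathbf{p}}}(X)$, and using $\mathbf{p}^{\star}$ at $Y$ gives $F(Y\cup\{s\})-F(Y)\le\phi_{\mathbf{p}^{\star}}(Y\cup\{s\})-\phi_{\mathbf{p}^{\star}}(Y)$. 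The per-$\mathbf{p}$ submodularity of $\phi_{\mathbf{p}}$ established above reduces the desired inequality to a comparison between the marginals of two \emph{different} $\phi_{\mathbf{p}}$ functions at $(Y,Y\cup\{s\})$. The hard part will be closing this gap: I plan to exploit the monotonicity of the optimal back-end load $\Lambda_b(\mathbf{p}^{*}(X))$ in $X$ (adding cache entries only offloads traffic from the queue) together with the convexity of $D(\cdot;\mathbf{x})$ in $\mathbf{p}$, so that the suboptimality of $\widehat{\mathbf{p}}$ at the larger set $Y\cup\{s\}$ is dominated by the marginal cache gain itself, yielding the required diminishing-returns inequality.
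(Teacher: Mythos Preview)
Your monotonicity argument and your CI treatment via the max-coverage structure are correct and, in spirit, match what the paper is after. The genuine issue is in the CS case: the swap argument you outline does not close, and the ``monotonicity of back-end load plus convexity'' idea you sketch at the end is not enough to bridge the gap. Concretely, with $\widehat{\mathbf{p}}=\mathbf{p}^*(X)$ and $\mathbf{p}^\star=\mathbf{p}^*(Y\cup\{s\})$ your two inequalities reduce the task to
\[
\sum_i \lambda_i q_{ik}\,\widehat p_{ik}\,\bigl(d_{ik}(\mathbf{x}_X)-d^h_{im}\bigr)^+\;\ge\;\sum_i \lambda_i q_{ik}\,p^\star_{ik}\,\bigl(d_{ik}(\mathbf{x}_Y)-d^h_{im}\bigr)^+,
\]
for $s=s_{km}$. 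The $d_{ik}$ factors move the right way ($d_{ik}(\mathbf{x}_X)\ge d_{ik}(\mathbf{x}_Y)$), but the $p_{ik}$ factors need not: adding content to the caches typically \emph{increases} the fraction routed to caches, so one should expect $p^\star_{ik}\ge \widehat p_{ik}$ in general, which pushes the inequality the wrong way. Nothing in ``convexity of $D$ in $\mathbf{p}$'' repairs this; that convexity was already fully spent in writing $F(X)=\max_{\mathbf p}\phi_{\mathbf p}(X)$.

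The paper closes the CS case by a different route that avoids comparing marginals under two different policies. Using the KKT conditions for the inner optimization it shows that the optimal routing is a \emph{threshold} policy: there is a scalar $d^*$ with $p_{ij}=1$ when $d_{ij}(\mathbf{x}_X)<d^*$ and $p_{ij}=0$ when $d_{ij}(\mathbf{x}_X)>d^*$, and the resulting optimal delay admits the closed form
\[
-D_X \;=\; \bigl(\sqrt{\mu\,d^*_X}-1\bigr)^2 \;-\;\sum_{i,j}\lambda_i q_{ij}\,\min\{d^*_X,\,d_{ij}(\mathbf{x}_X)\}\;\triangleq\;f(d^*_X;X),
\]
with $d^*_X$ the optimizer of $f(\cdot;X)$. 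Two structural facts then do all the work: (i) $d^*_X$ is monotone in $X$ (larger $X$ gives smaller threshold), and (ii) for fixed $d$ the difference $f(d;Y)-f(d;X)=\sum_{i,j}\lambda_i q_{ij}\bigl(\min\{d,d^X_{ij}\}-\min\{d,d^Y_{ij}\}\bigr)$ is nondecreasing in $d$ whenever $X\subseteq Y$. Combining these, the paper replaces the ``wrong'' optimizer in each of $\Delta(X;s)$ and $\Delta(Y;s)$ by a \emph{common} value of $d$ (namely $d^*_X$), so that both marginals are evaluated at the same threshold; submodularity then reduces to the elementary pointwise inequality $\min\{d,d^X_{ij}\}-\min\{d,d^{X\cup s}_{ij}\}\ge \min\{d,d^Y_{ij}\}-\min\{d,d^{Y\cup s}_{ij}\}$, which follows from $d^{Z\cup s_{km}}_{ij}=\min\{d^{Z}_{ij},d^h_{im}\}$. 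The key idea you are missing is precisely this reduction to a \emph{one-parameter} family $f(d;X)$ via the threshold structure, which replaces the intractable comparison across two high-dimensional optimizers $\widehat{\mathbf p}$ and $\mathbf p^\star$ by a comparison across two scalars $d^*_X$ and $d^*_{Y\cup s}$.
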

\begin{proof}
See the Appendix for a detailed proof.
\end{proof}

A direct consequence of Lemma~\ref{lem:F} is that minimizing the objective function in~\eqref{eq:opt_constant} or~\eqref{eq:mm1_delay} is equivalent to maximizing a monotone submodular function. Therefore, the approximate solution obtained by the greedy algorithm in Algorithm~\ref{alg:greedy} is within a $(1-1/e)$ factor of the optimal solution  (see~\cite{Calinescu07}).

Algorithm~\ref{alg:greedy} starts with empty caches and at each step
greedily adds a file to the cache that maximizes function $F$.
This process continues until all caches are filled to capacity.
Optimal routing is then determined based on the content
placement.

\begin{algorithm}
\caption{GreedyWG: A greedy approximation with performance guarantees.}
\label{alg:greedy}
\begin{algorithmic}[1]
	\Statex
		%\State $S\gets \{s_{11}, \ldots, s_{jm}, \ldots, s_{KM}\}$
		\State $S\gets \{s_{jm} : 1 \le j \le K, 1\le m\le M\}$
		\State $X_m \gets \O, \forall{m}$
		\State $X \gets \O$
		\For{$c \gets 1 \textbf{ to } \sum_m{C_m}$}
			\State $s_{j^* m^*} \gets \arg\max_{s_{jm}\in S}{F(X\cup \{s_{jm}\})}$
			\State $X_{m^*}\gets X_{m^*}\cup \{s_{j^*m^*}\}$
			\State $X\gets X\cup \{s_{j^*m^*}\}$
			\If{$|X_{m^*}| = C_{m^*}$}
				\State $S \gets S \backslash s_{jm^*}, \forall{j}$
			\Else
				\State $S \gets S \backslash s_{j^*m^*}$
			\EndIf
		\EndFor
		\State Content placement is done according to $X$.
		\State Determine the routing as $\mathbf{p}^* \gets \arg\min_{\mathbf{p}}{D(\mathbf{p}; \mathbf{x}_X)}$.
\end{algorithmic}
\end{algorithm}

Although the greedy algorithm in Algorithm~\ref{alg:greedy} is guaranteed to find solutions within a $(1-1/e)$ factor of the optimal solution, its complexity is high, $O(M^2N^2K^2\log{(NK)})$. We devise a second, computationally more efficient, greedy algorithm in Algorithm~\ref{alg:greedy2} with time complexity $O(M^3NK)$. We do not have accuracy guarantees for Algorithm~\ref{alg:greedy2}, but in the next section, we will show that it performs very well in practice.

Algorithm~\ref{alg:greedy2} is based on the following ideas. It starts with empty caches and initializes the cache access delays for users as the miss delays to their closest caches. Then at each step a file is greedily selected to be placed in a cache that maximizes the change in the user access delays, $\sum_i{\lambda_i q_{ij} (d_{ij} - \min\{d_{ij}, d^h_{im}\})}$. This process continues until the caches are filled. Finally, similar to Algorithm~\ref{alg:greedy}, a routing policy that minimizes $D(\mathbf{p}; \mathbf{x})$ is determined.

\begin{algorithm}
\caption{Greedy: A greedy approximation without known performance guarantees.}
\label{alg:greedy2}
\begin{algorithmic}[1]
	\Statex
		\State $X_m \gets \O, \forall{m}$
		\State $X \gets \O$
		\State $d_{ij} \gets \min_c\{d^m_{ic}\}, \forall{i, j}$
		\For{$c \gets 1 \textbf{ to } \sum_m{C_m}$}
			\State $G_{jm} \gets [0]_{K\times M}$
			\For{$m \gets 1 \textbf{ to } M$}
				\If{$|X_m| < C_m$}
					\For{$j \gets 1 \textbf{ to } K$}
						\State $G_{jm} \gets \sum_i{\lambda_i q_{ij} (d_{ij} - \min\{d_{ij}, d^h_{im}\})}$
					\EndFor
				\EndIf			
			\EndFor
			\State $[j^*, m^*] \gets \arg\max_{j,m}{G_{jm}}$
			\State $X_{m^*}\gets X_{m^*}\cup \{s_{j^*m^*}\}$
			\State $X \gets X \cup \{s_{j^*m^*}\}$
			\State $d_{ij^*} \gets \min\{d_{ij^*}, d^h_{im^*}\}, \forall{i}$
		\EndFor
		\State Content placement is done according to $X$.
		\State Determine the routing as $\mathbf{p}^* \gets \arg\min_{\mathbf{p}}{D(\mathbf{p}; \mathbf{x}_X)}$
\end{algorithmic}
\end{algorithm}

\section{Performance Evaluation}
\label{sec:trace}

In this section, we evaluate the performance of the approximate algorithms. Our goal here is to evaluate 1) how well the solutions of greedy algorithms compare to the optimal (when computing the optimal solution is feasible), and 2) how well solutions from the greedy algorithms compare to those produced by a baseline.
Due to lack of space, we only consider the more realistic case of congestion-sensitive delay model.
For our baseline, we compare the approximate algorithms to the delay obtained by the following algorithm we will refer to as \emph{p-LRU}.

\subsection{p-LRU}
The cache replacement policy at all caches is Least Recently Used (LRU). For the routing policy, we assume that users that are not connected to any caches forward all their requests to the back-end servers. The remaining users, for each request, use a cached path with probability $p$ and with probability $1-p$ forward the request to the uncached path. If user $i$ decides to use a cached path, she chooses uniformly at random one of the $n_i$ caches she is connected to. The value of $p$ is the same for all users that have access to a  cache, and is optimized to minimize the average delay.

First, assuming users equally split their traffic across the caches that they can access, the aggregate popularity for individual files is computed at each cache. Let $r^m_j$ denote the normalized aggregate popularity of file $j$ at cache $m$. We have
\[r^m_j = \frac{1}{\Lambda}\sum_{i\in \mathcal{I}_m}{\lambda_i q_{ij} / n_i},\]
where $\mathcal{I}_m$ denotes the set of users connected to cache $m$, and $\Lambda$ is the normalizing constant across all files. Note that $r^m_j$ is independent of the parameter $p$.
With the aggregate popularities at hand, hit probabilities are computed at each cache using the \emph{characteristic time} approximation~\cite{Che01}.
Let $\mathbb{P}(x_{jm}=1)$ denote the probability that file $j$ resides in cache $m$. From~\cite{Che01} we have 
\[\mathbb{P}(x_{jm}=1) = 1 - \exp{(-r^m_j T_m)},\]
where $T_m$ is the characteristic time of cache $m$ is the unique solution to the equation
\[C_m = \sum_{j}{1 - \exp{(-r^m_j T_m)}}.\]
Given the cache hit probabilities, the average delay in accessing content $j$ from caches for user $i$ equals
\[d^c_{ij} = \frac{1}{n_i}\sum_{m\in\mathcal{M}_i}{\left[ \mathbb{P}(x_{jm}=1) \dhit + (1-\mathbb{P}(x_{jm}=1))\dmiss \right] },\]
where $\mathcal{M}_i$ denotes the set of caches that user $i$ is connected to. Note that $|\mathcal{M}_i| = n_i$.

Let $\mathcal{I}$ denote the set of users that are at least connected to one cache, and let $\lambda_{\mathcal{I}}$ denote the aggregate request rate of these users.
The average delay of accessing content from caches then equals
\[D_c = \frac{1}{\lambda_{\mathcal{I}}}\sum_{i\in\mathcal{I}}{\sum_j{\lambda_i q_{ij} d^c_{ij}}}.\]

Remember that some users might not be connected to any caches. Considering the traffic from all users, we can write the overall average delay as
\begin{align*}
D_{\text{LRU}} &= \frac{1}{\lambda} \left[ p \lambda_{\mathcal{I}} D_c + (1-p) \sum_{i\in\mathcal{I}}{\lambda_i \duc} + \sum_{i\not\in\mathcal{I}}{\lambda_i\duc} \right. \\
&\left. \qquad + \frac{\mu}{\mu - (1-p)\sum_{i\in\mathcal{I}}{\lambda_i} - \sum_{i\not\in\mathcal{I}}{\lambda_i}} - 1 \right].
\end{align*}
By differentiating $D_{\text{LRU}}$ with respect to $p$, the optimal value of $p$ is found to be
\[p^* = \max\{0, \min\{1,\left(\sqrt{\frac{\mu \sum_{i\in\mathcal{I}}{\lambda_i}}{\lambda_{\mathcal{I}} D_c - \sum_{i\in\mathcal{I}}{\lambda_i \duc}}} - \mu + \lambda \right) / \lambda_{\mathcal{I}}\}\}.\]

\subsection{Network Setup}
We consider a network with users uniformly distributed in a square field.
We consider two architectures. First, we assume there is only one large cache at the center of the network as in Figure~\ref{fig:trace_network}a. Second, we consider a network with five small caches with equal storage capacity as in Figure~\ref{fig:trace_network}b. %It is assumed that the available cache budget is the same for the two architectures.
Figure~\ref{fig:trace_network} also shows the communication range of the caches in each case.
In the single-cache network, the cache has a larger communication range and five times the capacity of each of the caches in the multi-cache network.

Users that are not in the communication range of any caches can only use the uncached path to the back-end server. The hit delay for each user is linearly proportional to the distance from the cache and has the maximum value\footnote{Here, delay aggregates all request propagation and download delays as well as the processing and queuing delays. We use normalized delay values instead of using any specific time unit.} of 12.5 time units and 5.5 time units for the single and multi-cache systems, respectively. For a cache miss, an additional delay of 25 time units is added to the hit delay.
The initial access delay of the uncached path is set to 5 time units for each user, and the service rate is proportional to the aggregate request rate, where the scaling factor will be specified later.

\begin{figure}[h]
\hspace{-0.275in}
\centering
  	\includegraphics[scale=0.4]{./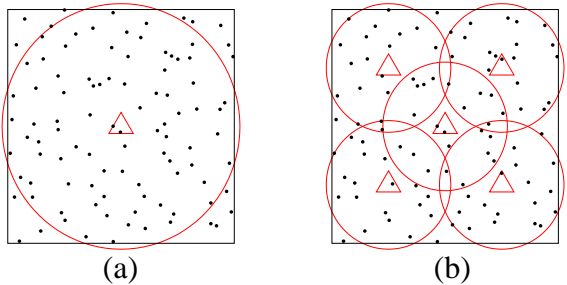}
\caption{A network with (a) one cache, and (b) five caches.}
    \label{fig:trace_network}
\end{figure}

\subsection{Numerical Evaluation}
\subsubsection{GreedyWG vs. Optimal}
First, we compare the solution of GreedyWG the approximate algorithm in Algorithm~\ref{alg:greedy} to the optimal solution. Due to the exponential complexity of finding the optimal solution, we are only able to compute the optimal solution for small problem instances. Here, we consider a network with five users and a single cache. User request rates are arbitrarily set to satisfy $\sum_i{\lambda_i} = 5$. We assume users are interested in 15 files, and that the aggregate user request popularities follow a Zipf distribution with skewness parameter~0.6. The service rate of the back-end server equals $\mu = 1$.

%The hit delays for the cached paths are arbitrarily selected between 4 and 14, and a miss adds to the hit delay with a random value between 1 and 10. The initial delay for the uncached path is arbitrarily set to a value between 0.5 and 2 for each user, and the service rate of the back-end server equals $\mu = 1$.

Figure~\ref{fig:single} shows the average delay and the $95\%$ confidence interval over 100 runs of each algorithm.
It is clear that GreedyWG performs very close to optimal. In fact, we observe that GreedyWG differs from the optimal solution in less than $20\%$ of the time, and the relative inaccuracy is never more than $1\%$.

\begin{figure}[h]
\begin{center}
  	\includegraphics[scale=0.5]{./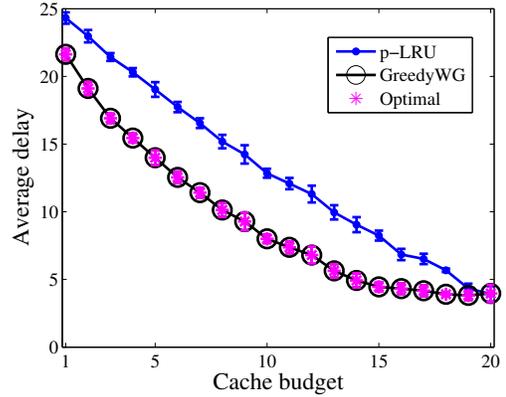} % scale=0.25
  \caption{Evaluation of GreedyWG against Optimal and p-LRU.}
 %\vspace{-0.5cm}
    \label{fig:single}
\end{center}
\end{figure}

\subsubsection{GreedyWG vs. Greedy}
Next, we compare the solutions of GreedyWG against those of Greedy, the approximate algorithm, Algorithm~\ref{alg:greedy2}, with lower computational complexity but no performance guarantees.
We consider a network with five caches and 100 users uniformly distributed in a $10\times 10$ field.

Figure~\ref{fig:mutli_cache_size} shows the average delay and the $95\%$ confidence interval for different values of available cache budget.
Greedy (red curve) is barely distinguishable from GreedyWG (black curve), meaning that Greedy performs very close to GreedyWG.

We also evaluate these algorithms over different values of the service rate at the back-end server. Figure~\ref{fig:mutli_mu} shows the average delay
for $\mu$ between 2 to 7, with the aggregate traffic rate set to $\lambda = 5$.
Similar to Figure~\ref{fig:mutli_cache_size}, Greedy performs very close to GreedyWG, and is always within $1\%$ of GreedyWG.

%\begin{figure}[h]
%\begin{center}
%%\hspace*{-1in}
%\begin{subfigure}{}
%  	\includegraphics[scale=0.25]{./} % scale=0.25  
%\end{subfigure}%
%%\hspace*{-0.75in}
%\begin{subfigure}{}
%  	\includegraphics[scale=0.25]{./} % scale=0.25  
%\end{subfigure}
%\end{center}
%\caption{(a) Evaluation of the two greedy approximations over different values of the cache size. The number of files is set to 100. (b) Evaluation of the greedy algorithms over different values of the service rate at the back-end server. Aggregate user request rate is $\lambda= 5$.}
% %\vspace{-0.5cm}
%    \label{fig:numerical}
%\end{figure}

\begin{figure}[h]
\centering
  	\includegraphics[scale=0.5]{./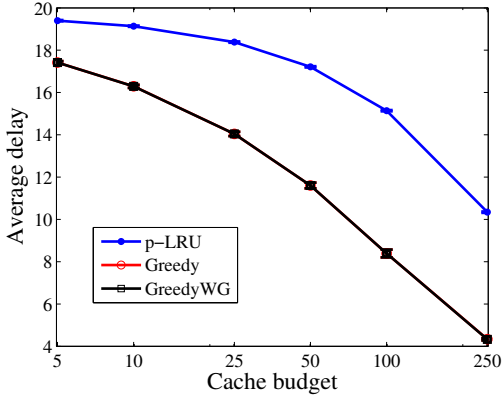} % scale=0.25
  \caption{Evaluation of the two greedy approximations over different values of the cache budget split equally between five caches. Aggregate user request rate is $\lambda= 5$,  and service rate of the back-end server equals 2.5.}
% \vspace{-0.5cm}
    \label{fig:mutli_cache_size}
\end{figure}

\begin{figure}[h]
\begin{center}
  	\includegraphics[scale=0.5]{./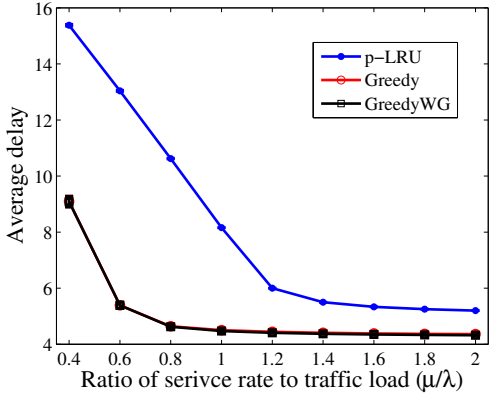} % scale=0.25
  \caption{Evaluation of the greedy algorithms for different values of the service rate at the back-end server. Aggregate user request rate is $\lambda= 5$, and the service rate varie from 2 to 10. Cache budget is set to 125.}
 %\vspace{-0.5cm}
    \label{fig:mutli_mu}
\end{center}
\end{figure}

%In~\cite{dropbox}, we present results for performance evaluation of the two greedy algorithms discussed in Section~\ref{sec:approx}. Due to high complexity of finding the optimal solution, we could only evaluate the algorithms over small networks. Through numerical simulations, we show that i) the greedy algorithms perform very close to the optimal with maximum error of less than $5\%$, and ii) the low complexity greedy algorithm (Algorithm~\ref{alg:greedy2}) performs very close to the high complexity greedy algorithm with performance guarantees (Algorithm~\ref{alg:greedy}).

\subsection{Trace-driven Simulation}
Here, we present trace-driven evaluation results where we use traces for web accesses collected from an industrial research lab. %~\cite{ibm_trace}.
The trace consists of approximately 9 million requests generated from 142,000 distinct IP addresses for more than 3 million distinct files. % over a period of five hours.
We only consider Greedy, the greedy algorithm presented in Algorithm~\ref{alg:greedy2}, since it performs close to Algorithm~\ref{alg:greedy}, and has lower complexity.

%The access delay to each cache equals one-tenth of the distance from the cache in case of a cache hit. For a cache miss, an additional delay of 25ms is added to the hit delay. The initial access delay of the uncached path is set to 5ms for each user, and the service rate is proportional to the aggregate request rate, where the scaling factor will be specified later.

\begin{figure}[h]
\centering
  	\includegraphics[scale=0.5]{./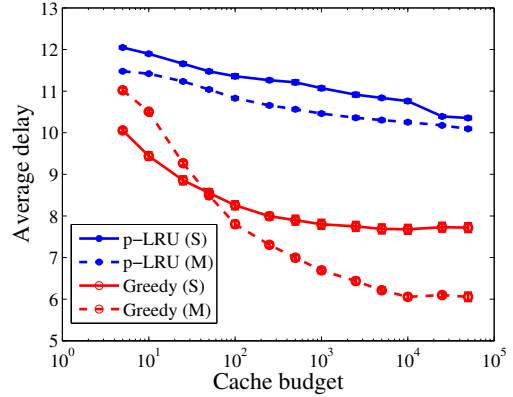} % scale=0.25  
\caption{Evaluation of the Greedy and p-LRU for the single-cache (S) and multi-cache (M) network setups for different values of the available cache budget. The service rate is set to be 0.8 times the aggregate traffic rate.}
% \vspace{-0.5cm}
    \label{fig:trace_cache_size}
\end{figure}

\begin{figure}[h]
\centering
  	\includegraphics[scale=0.5]{./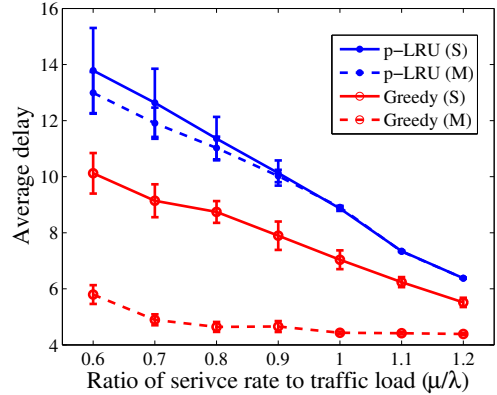} % scale=0.25  
\caption{Evaluation of the Greedy and p-LRU algorithms for different values of the service rate to aggregate traffic ratio for the single-cache (S) and multi-cache (M) network setups.}
 %\vspace{-0.5cm}
    \label{fig:trace_mu}
\end{figure}

To evaluate the Greedy algorithm using the trace data, we first divide the trace into smaller segments of approximately 120,000 requests. Each segment includes requests for approximately 40,000 distinct files, generated by approximately 2500 users. For every two consecutive segments, we use the first segment as learning dataset from which we compute the file popularities and determine the optimal value $p$ for the p-LRU scheme as well as content placement and routing based on the Greedy algorithm. We use the second segment to compute the average delays under the p-LRU and Greedy algorithms.

Figure~\ref{fig:trace_cache_size} compares the average delays for different cache budgets for the p-LRU and the Greedy algorithms for the single-cache (S) and multi-cache (M) networks. Significant reductions in average delay of up to $50\%$ are observed for both single-cache and multi-cache networks when using Greedy over p-LRU. While p-LRU yields similar performance in both single-cache and multi-cache architectures, the Greedy shows the advantage of one architecture over the other depending on the cache budget. When the cache budget is small, it is better to have a single cache with larger cache size and coverage so that more users can access popular files from the cache; when the cache budget is large, it is better to have multiple caches, each with smaller size and coverage, so that users can access files from nearby caches with smaller hit delays.

We also evaluate the algorithms for different values of the service rate of the uncached path assuming the cache budget is fixed at $10,000$.
Figure~\ref{fig:trace_mu} shows the average delay when the ratio of service rate to the total request rate changes from $0.6$ to $1.2$.
Similar to Figure~\ref{fig:trace_cache_size}, the Greedy algorithm significantly reduces the average content access delay. Again, the cache architecture makes little difference for p-LRU but significant affect to the performance of the Greedy algorithm. Moreover, the difference decreases as the service rate on the uncached path increases, as more traffic is offloaded to the uncached path.

\section{Related Work}
\label{sec:relatedWork} 

In this paper, we have considered the {\it joint\/} routing and cache-content management problems.  Numerous past  research efforts have considered these problems separately.
%It has also been shown that content caching can significantly reduce energy consumption~\cite{Valancius09}.
%With the storage capabilities in hand, the first question is \emph{what content should be cached and where}. 
The problem of content placement in caches, has received significant attention in the Internet, in hybrid networks such as those considered in this paper, and in sensor networks~\cite{Tang07, Baev08, Borst10, Sitaraman10, Golrezaei13, Poularakis13}.
Baev~\etal~\cite{Baev08} prove that the problem of content placement with the objective of minimizing the access delay is NP-complete, and present approximate algorithms. 
%Given the stored content in the caches, the next question is \emph{how to route the requests for content}.
The separate problem of efficient routing in cache networks
% (where every node in the network can cache content and also generate requests for content)
has also been explored in the literature~\cite{Rosensweig_09, Chai_12, Psaras_12}. Rosensweig~\etal~\cite{Rosensweig_09} propose Breadcrumbs -- a simple, best-effort routing policy for locating cached content. Cache-aware routing schemes that calculate paths with  minimum transportation costs based on given caching policy and request demand have been proposed in~\cite{Sourlas13_2}.

The joint caching and routing problem,  with the objective of minimizing content access delay, has recently been studied in~\cite{Golrezaei13, Poularakis13}, where the authors consider a hybrid network consisting of multiple femtocell caches and a cellular infrastructure.
Both papers assume that users greedily choose the minimum delay path to access content, \ie\ requests for cached content are routed to  caches (where content is know to reside), whereas remaining requests are routed to the (uncached) cellular network. They  assume that the delays are constant and independent of the request rate.  
%Although greedy routing achieves minimum delay when link delays are independent of the request rate, we argue that is not the case in a congestion-sensitive delay model. 

Our work differs from much of the previous research discussed above by considering a joint caching and routing problem, where we determine the optimal routes users should take for accessing content as well as the optimal caching policy.  Our research differs from 
\cite{Golrezaei13, Poularakis13} in
that we consider heterogeneous delays between users and caches, consider a congestion-insensitive delay model for the uncached path as well as a congestion-sensitive model, investigate the problem's time complexity, and propose bounded approximate solutions for both congestion-insensitive and congestion-sensitive scenarios. 
We also determine scenarios for which the optimal solution can be found in polynomial time for the congestion-insensitive delay model, and ascertain the root cause of the complexity of the general problem.

Recent work has also theoretically analyzed the benefits of content caching~\cite{Westphal13, Rosensweig_10, Rosensweig_13, Sourlas13, Sourlas13_2, Gitzenis13}. \cite{Westphal13,Gitzenis13} demonstrate that the asymptotic throughput capacity of a network is significantly increased by adding caching capabilities to the nodes.

%Caching of web-based content has been extensively studied \cite{Liu_98, Che01, Sitaraman10}. TBD - we might want to talk about web-caching  as well.
%Our caching framework builds on work done for web caches by Liu \etal\ where the authors show that highest cache hit rates are obtained for a static caching policy under independent and stationary request processes~\cite{Liu_98}. 

\section{Conclusion}
\label{sec:conclusion}

In this paper, we have considered the problem of joint content placement and routing in heterogeneous networks that support in-network caching but also provide a separate, single-hop (uncached) path to a back-end content server; we considered cases in which this uncached path was modeled as a congestion-insensitive, constant-delay path, and  a congestion-sensitive path modeled as an M/M/1 queue. 
We provided fundamental complexity results showing that the problem of joint caching and routing is NP-complete in both  cases, developed a greedy algorithm with guaranteed performance of $(1-1/e)$ of the optimal solution as well as a lower complexity heuristic that was empirically found to provide average delay performance that was within 1\% of optimal (for small instances of the problem) and that significantly reduce the average content access delay over the case of optimized traditional LRU caching.  Our investigation of special-case scenarios $-$ the  congestion-insensitive two-cache case (where we demonstrated an optimal polynomial time solution) and the congestion-sensitive, single-cache, single-file-of-interest case (which we demonstrated remained NP-complete) $-$ helped illuminate what makes the problem ``hard'' in general.  Our future work is aimed at developing a distributed  algorithm for content placement and routing, and on developing solutions for the case of time-varying content popularity.

\bibliographystyle{IEEEtran}
\bibliography{reference}

\begin{appendices}

\section{Network with Two Caches}
\label{sec:appndx_two_cache}

\begin{figure}[t]
\begin{center}
%\vspace{-0.5in}
\includegraphics[scale=0.4]{./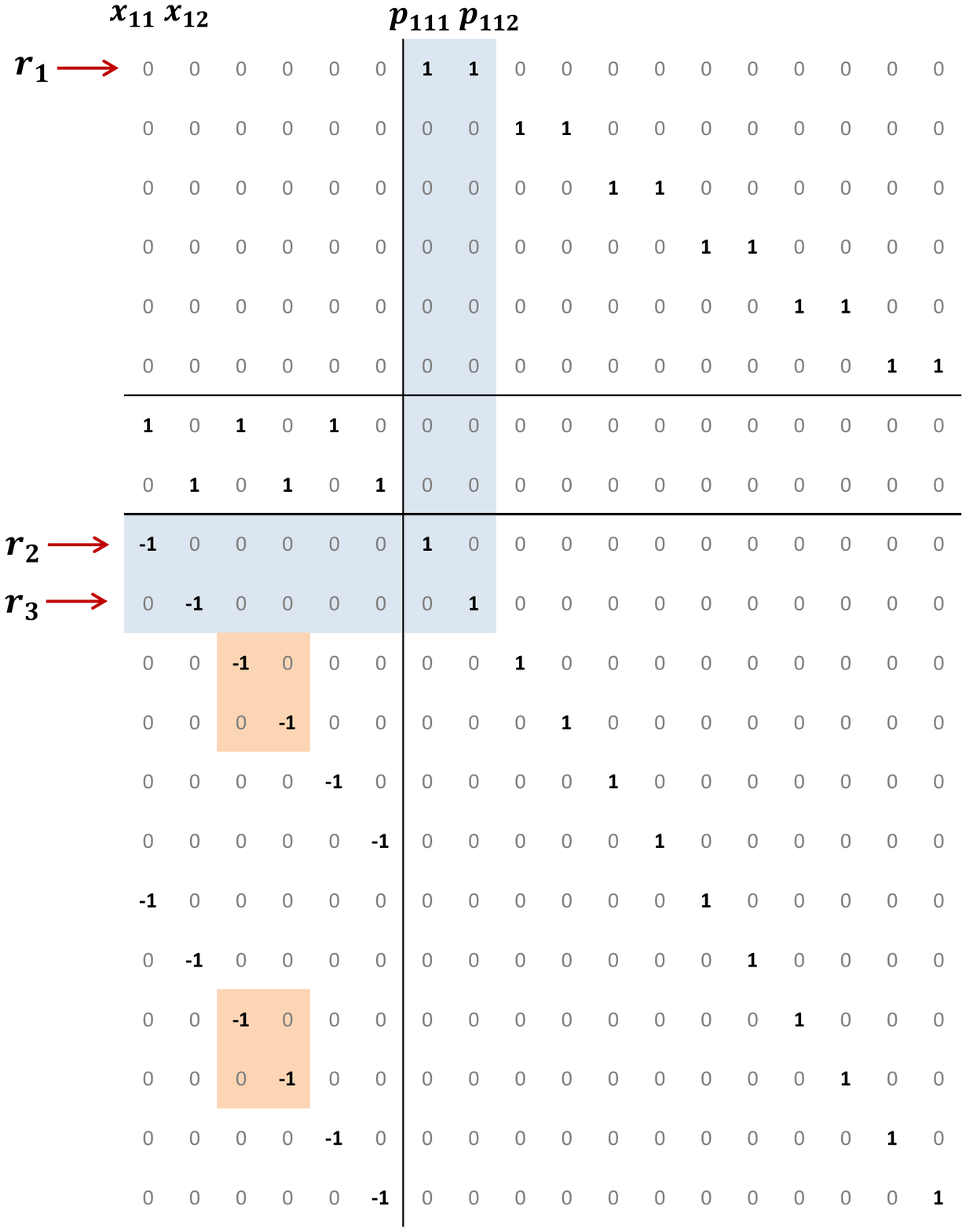}
%\vspace{-0.5in}
\caption{An example of the constraints matrix $A$ for a network with two caches, two users and three files}
\label{fig:A2}
\end{center}

\end{figure}
\begin{proof}
Consider the highlighted elements of the matrix in Figure~\ref{fig:A2},
and let $r_1$ denote the first row of the matrix. Also, let $r_2$ and $r_3$ denote
the first two rows below the second horizontal line.
It is easy to see that if these three rows are selected to be in $R$,
any assignment satisfying Proposition~\ref{prop:TU} should have $-s(r_1) = s(r_2) = s(r_3)$.
Otherwise, the signed sum of the rows will have entries other than $\{0, \pm 1\}$.
This observation can be easily extended to see that rows below the second horizontal line
can be considered in groups of two such that if the two rows are selected to be in $R$
they will be assigned the same sign.

We sign the rows in $R$ starting from the rows below the second horizontal line.
Considering the groups of two rows, we make assignments such that the elements
to the left of the vertical line of the signed sum of the rows are in $\{0, -1\}$ only.
To see why this is possible, note that
the non-zero elements of the matrix to the left of the vertical line can be seen
as small blocks of $2\times 2$ matrices. It is easy to see that the signed
sum of any subset of these blocks can be made to have elements only in $\{0, -1\}$,
with rows in the same group getting the same assignment. The rows between the two
horizontal lines are always signed $+1$. The sign of the rows above the first 
horizontal line follows the assignment of the lines below the second horizontal line
based on the previous discussion.

With the above procedure, the sum of the signed vectors will have entries in $\{0,\pm 1\}$
for any set of rows $R$, and from Proposition~\ref{prop:TU} it follows that the matrix $A$ is
totally unimodular, and hence the solution for the optimization problem in~\eqref{eq:opt_simpl_constant}
for a network with two caches can be found in polynomial time.

\end{proof}

\section{Proof of Lemma 2}
\label{sec:mm1_np}
\begin{proof}
It is easy to see that given some $\mathbf{x}, \mathbf{p}$ the expected delay $D(\mathbf{x}, \mathbf{p})$ can be computed in polynomial time, and hence CSDDP is in NP. To show it is NP-hard, we reduce the problem of Equal Cardinality Partition (ECP) to our problem. For an instance of the ECP($A$) problem we create the instance CSDDP($S, A, [\frac{4}{S}], [+\infty], [\frac{4}{a_i}], \frac{n}{2}$) where $S = \sum_{a_i\in A}{a_i}$.

Now, the set $A$ can be partitioned into subsets $A_1$ and $A_2$ with $|A_1|=|A_2|$
if and only if CSDDP achieves delay $(2n+3)/S$.

To see more clearly why the reduction works, first note that with the delay values being set to $d_i^h = \frac{4}{S}$ and $d_i^b = \frac{4}{a_i}$, since $d_i^h < d_i^b$ if a file exits in the cache all the requests for that file will be directed to the cache. Also, since $d_i^m = +\infty$, if a file is not in the cache all the requests for that file will be requested from the back-end server. Therefore, we have $p_i = x_i,\forall{i}$. Now, with the service rate set to $\mu = S$, we can re-write the optimization problem
in~\eqref{eq:single_cache} as follows
\begin{equation}
\label{eq:single_cache_simple}
\begin{aligned}
\text{minimize } & \frac{1}{S}\left[\frac{4}{S}\sum_{i=1}^{n}{a_i x_i} + 4\sum_{i=1}^{n}{(1 - x_i)} +
\frac{S}{\sum_{i=1}^{n}{a_i x_i}} - 1 \right]\\
\text{such that } & \quad\sum_{i=1}^{n}{x_i} \le \frac{n}{2} \\
&\quad x_i \in \{0, 1\}
\end{aligned}
\end{equation}

Now, looking at the objective function in \eqref{eq:single_cache_simple} we can see that
\[Z_1 = 4\sum_{i=1}^{n}{(1 - x_i)} \ge 2n\]
since we should have $\sum_{i=1}^{n}{x_i} \le n/2$. Moreover, $Z_1 = 2n$ if $\sum_{i=1}^{n}{x_i} = n/2$ meaning that exactly half of the files are in the cache. We also have that
\[Z_2 = \frac{4}{S}\sum_{i=1}^{n}{a_i x_i} + \frac{S}{\sum_{i=1}^{n}{a_i x_i}} \ge 4,\]
and $Z_2 = 4$ only if $\sum_{i=1}^{n}{a_i x_i} = S/2$.

Hence, $Z_1 + Z_2 - 1 = 2n+3$ if and only if $\sum_{i=1}^{n}{a_i x_i} = S/2$ and
$\sum_{i=1}^{n}{x_i} = n/2$.

Therefore, if CSDDP($S, A, [\frac{4}{S}], [+\infty], [\frac{4}{a_i}], \frac{n}{2}$)
achieves minimum delay $(2n+3)/S$ then $A$ can be partitioned into equal cardinality subsets.

It is easy to see that if $A$ can be partitioned into two subsets of equal cardinality, then CSDDP($S, A, [\frac{4}{S}], [+\infty], [\frac{4}{a_i}], \frac{n}{2}$) has minimum delay of $(2n+3)/S$.
\end{proof}

\section{Proof of Lemma 4}
\label{sec:appndx_submodularity}
\begin{proof}

Since the congestion-insensitive case can be seen as a special case of the congestion-sensitive model with $\mu = +\infty$, we give a proof for the congestion-sensitive case only.

It is easy to see that placing more content in the cache will not increase the delay, and this implies that $F$ is a monotone increasing function of $\mathbf{x}$.

Writing the Lagrangian for the optimization in~\eqref{eq:optimization} we get:
\[L = -D(\mathbf{p}; \mathbf{x}) + \sum_{i,j}{\nu_{ij}(1-p_{ij})} + \sum_{i,j}{\sigma_{ij}p_{ij}}.\]

Differentiating with respect to $p_{ij}$ we have
\[\lambda_i q_{ij} d_{ij} - \frac{\mu \lambda_i q_{ij}}{(\mu - \sum_{i,j}{\lambda_i q_{ij} p_{ij}})^2} - \nu_{ij} + \sigma_{ij} = 0,\]
or equivalently
\begin{equation}
\label{eq:mm1_diff}
\frac{1}{(\mu - \sum_{i,j}{\lambda_i q_{ij} p_{ij}})^2} = \frac{\lambda_i q_{ij} d_{ij} -\nu_{ij} + \sigma_{ij}}{\mu \lambda_i q_{ij}}.
\end{equation}
Using K.K.T. conditions, the following are necessary:
\begin{equation}
\label{eq:lagrange_derivatives}
\begin{aligned}
&\nu_{ij} (1-p_{ij}) = 0 \\
&\sigma_{ij}p_{ij} = 0 \\
&p_{ij} \le 1 \\
&-p_{ij} \le 0 \\
&\nu_{ij}, \sigma_{ij} \ge 0
\end{aligned}
\end{equation}
From the above conditions, it is clear that if $p_{ij} = 1$, then $\sigma_{ij} = 0$.
Also, if $p_{ij} = 0$, then $\nu_{ij} = 0$, and if $0 < p_{ij} < 1$ then $\sigma_{ij} = 0$ and $\nu_{ij} = 0$.
Therefore,~\eqref{eq:mm1_diff} simplifies to
\[\frac{1}{(\mu - \sum_{i,j}{\lambda_i q_{ij} p_{ij}})^2} =
 \left\lbrace \begin{array}{lc}
 \frac{d_{ij}}{\mu} - \frac{\nu_{ij}}{\mu\lambda_i q_{ij}} & p_{ij} = 1 \\
 \frac{d_{ij}}{\mu} & 0 < p_{ij} < 1 \\ 
 \frac{d_{ij}}{\mu} + \frac{\sigma_{ij}}{\mu\lambda_i q_{ij}} & p_{ij} = 0
 \end{array} \right.\]

Since $\nu_{ij}, \sigma_{ij} \ge 0$, there should exist some $d^*$ such that
if $d_{ij} < d^*$ then $p_{ij} = 0$, and if $d_{ij} > d^*$ then $p_{ij} = 1$.
Moreover,
\[\frac{1}{(\mu - \sum_{i,j}{\lambda_i q_{ij} p_{ij}})^2} = \frac{d^*}{\mu},\]
or equivalently
\begin{equation}
\label{eq:mm1_load_d}
\mu - \sum_{i,j}{\lambda_i q_{ij} p_{ij}} = \sqrt{\frac{\mu}{d^*}}.
\end{equation}

Let $X = \{x_{jm}\}$ denote the set of files in the caches. Since adding an item to the
cache will not increase $d_{ij}$, we have the following lemma:
\begin{lemma}
\label{lem:d_decreasing}
For two content placement sets $X$ and $Y$ if $X \subseteq Y$ then $d^*_X \ge d^*_Y$.
\end{lemma}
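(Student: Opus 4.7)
The plan is to exploit the KKT balance equation derived immediately before the lemma, recasting it as a single scalar fixed-point condition in $d^*$ and then using a pointwise comparison between the two content placements to pin down the direction of monotonicity.

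First, I would observe the obvious fact that $d_{ij}(\mathbf{x})$ is non-increasing in the content placement: if $X \subseteq Y$, then $d_{ij}(\mathbf{x}_Y) = \min_m d_{ijm}(\mathbf{x}_Y) \le \min_m d_{ijm}(\mathbf{x}_X) = d_{ij}(\mathbf{x}_X)$ for every $(i,j)$, since adding files to caches only enlarges the set of caches that register a ``hit'' for each file. Consequently, for any fixed threshold $d$, the set $\{(i,j): d_{ij}(\mathbf{x}) > d\}$ weakly shrinks in passing from $X$ to $Y$, so the request mass that would be sent through the uncached queue at threshold $d$, namely
\[T(d;\mathbf{x}) = \sum_{(i,j):\, d_{ij}(\mathbf{x}) > d} \lambda_i q_{ij},\]
satisfies $T(d;\mathbf{x}_Y) \le T(d;\mathbf{x}_X)$ for all $d > 0$.

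Second, I would combine this with the threshold characterization of the optimum (pairs with $d_{ij} > d^*$ contribute $p^*_{ij} = 1$ and those with $d_{ij} < d^*$ contribute $p^*_{ij}=0$, while pairs exactly at the threshold may be fractional) to rewrite equation \eqref{eq:mm1_load_d} as
\[T(d^*_{\mathbf{x}};\mathbf{x}) + \sqrt{\mu/d^*_{\mathbf{x}}} = \mu,\]
absorbing any boundary mass into $T$. Define $h_{\mathbf{x}}(d) = T(d;\mathbf{x}) + \sqrt{\mu/d}$. Both summands are non-increasing in $d$, and by the previous paragraph $h_{\mathbf{x}_Y}(d) \le h_{\mathbf{x}_X}(d)$ for every $d > 0$. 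Evaluating at $d^*_X$ gives $h_{\mathbf{x}_Y}(d^*_X) \le h_{\mathbf{x}_X}(d^*_X) = \mu = h_{\mathbf{x}_Y}(d^*_Y)$, and since $h_{\mathbf{x}_Y}$ is non-increasing this forces $d^*_Y \le d^*_X$, which is the claim.

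The main technical nuisance, rather than a real obstacle, is that $T(\cdot;\mathbf{x})$ is piecewise constant with downward jumps at the distinct values of $d_{ij}(\mathbf{x})$, so the equation $h_{\mathbf{x}}(d) = \mu$ may fall inside a jump and have no classical solution. I would handle this by defining $d^*_{\mathbf{x}}$ as the unique jump location at which $h_{\mathbf{x}}$ crosses $\mu$ and recognizing that the fractional $p^*_{ij}$ at pairs with $d_{ij}(\mathbf{x}) = d^*_{\mathbf{x}}$ are precisely what the KKT stationarity condition permits; once $d^*$ is interpreted this way, the inequality $h_{\mathbf{x}_Y}(d^*_X) \le \mu$ still forces the crossing location for $\mathbf{x}_Y$ to lie at or below $d^*_X$. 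Uniqueness of $d^*$ follows from the strict monotonicity of $\sqrt{\mu/d}$ on top of the non-increasing $T$.
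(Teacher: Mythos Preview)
Your argument is correct and rests on the same observation the paper invokes---monotonicity of $d_{ij}(\mathbf{x})$ in the placement---but where the paper's entire justification is the single sentence ``Since adding an item to the cache will not increase $d_{ij}$, we have the following lemma,'' you actually supply the missing link by recasting \eqref{eq:mm1_load_d} as a scalar crossing condition $h_{\mathbf{x}}(d)=\mu$ with $h_{\mathbf{x}}$ strictly decreasing in $d$ and pointwise dominated when the placement grows, then reading off the comparison of crossings. This is exactly how one would make the paper's assertion rigorous.

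One minor terminological slip worth flagging: under the paper's conventions ($p_{ij}=1$ when $d_{ij}>d^*$, with $p_{ij}$ the fraction routed to \emph{caches}), your $T(d;\mathbf{x})=\sum_{d_{ij}>d}\lambda_i q_{ij}$ is the mass sent to caches, not ``through the uncached queue.'' The identity $T(d^*;\mathbf{x})+\sqrt{\mu/d^*}=\mu$ you derive from \eqref{eq:mm1_load_d} and the subsequent monotonicity argument are unaffected, so the proof stands as written.
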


Note that based on the solution structure for the traffic forwarded to the uncached path we have that
\[\mu - \sum_{i,j}{\lambda_i q_{ij} p_{ij}} = \mu - \sum_{i,j}{\lambda_i q_{ij}  \mathbbm{1}\{d_{ij}  > d^*\}} - p^* \sum_{i,j}{\lambda_i q_{ij}},\]
or equivalently
\begin{equation}
\label{eq:traffic_split}
p^* \sum_{i,j}{\lambda_i q_{ij} \mathbbm{1}\{d_{ij} = d^*\}} = \mu - \sqrt{\frac{\mu}{d^*}} - \sum_{i,j}{\lambda_i q_{ij} \mathbbm{1}\{d_{ij} > d^*\}}
\end{equation}

Using~\eqref{eq:mm1_load_d} and~\eqref{eq:traffic_split} we can simplify the delay function as
\begin{align*}
D &= \sum_{i,j}{\lambda_i q_{ij} d_{ij} \mathbbm{1}\{d_{ij} < d^*\}} \\
&\quad + (1-p^*) d^* \sum_{i,j}{\lambda_i q_{ij} \mathbbm{1}\{d_{ij} = d^*\}} + \mu\sqrt{\frac{d^*}{\mu}} - 1 \\
&= \sum_{i,j}{\lambda_i q_{ij} d_{ij} \mathbbm{1}\{d_{ij} < d^*\}} + d^* \sum_{i,j}{\lambda_i q_{ij} \mathbbm{1}\{d_{ij} = d^*\}} \\
 &\quad - d^* (\mu - \sqrt{\frac{\mu}{d^*}} - \sum_{i,j}{\lambda_i q_{ij} \mathbbm{1}\{d_{ij} > d^*\}}) + \sqrt{\mu d^*} - 1 \\
&= -\mu d^* + 2\sqrt{\mu d^*} - 1 + \sum_{i,j}{\lambda_i q_{ij} d_{ij}\mathbbm{1}\{d_{ij} \le d^*\}} \\
&\quad + \sum_{i,j}{\lambda_i q_{ij} d^*\mathbbm{1}\{d_{ij} > d^*\}} \\
&= -(\sqrt{\mu d^*} - 1)^2 + \sum_{i,j}{\lambda_i q_{ij} \min\{d^*, d_{ij}\}}.
\end{align*}

Let $X=\{x_{jm}\}$ denote the set of files in the caches and define
\begin{equation}
\label{eq:fdX}
f(d; X)= (\sqrt{\mu d} - 1)^2 - \sum_{i,j}{\lambda_i q_{ij} \min\{d, d_{ij}\}}.
\end{equation}
Let $d^X_{ij}$ and $d^Y_{ij}$ denote the cache access delay for user $i$ for file $j$
given content placement $X$ and $Y$, respectively.
If $X\subseteq Y$, then $d^X_{ij} \ge d^Y_{ij}$ and hence
\begin{equation}
\label{eq:f_diff}
f(d; Y) - f(d; X) = \sum_{i,j}{\lambda_i q_{ij} \left(\min\{d, d^X_{ij}\} - \min\{d, d^Y_{ij}\}\right)}
\end{equation}
is a non-decreasing function of $d$. The following lemma summarizes this result.

\begin{lemma}
\label{lem:inc_d_diff}
For two sets $X \subseteq Y$ and for $d_1 \le d_2$ we have
\[f(d_2; Y) - f(d_2; X) \ge f(d_1; Y) - f(d_1; X).\]
\end{lemma}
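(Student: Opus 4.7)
The plan is to read the claim directly off the explicit formula for $f(d;Y)-f(d;X)$ already recorded in~\eqref{eq:f_diff}, and reduce it to an elementary scalar monotonicity. Since $X\subseteq Y$ means any file cached under $X$ is also cached under $Y$, the minimum cached-path access delay satisfies $d^X_{ij}\ge d^Y_{ij}$ for every user--file pair $(i,j)$. The coefficients $\lambda_i q_{ij}$ are nonnegative, so it suffices to prove that for fixed reals $a\le b$ the scalar function $g_{a,b}(d):=\min\{d,b\}-\min\{d,a\}$ is non-decreasing in $d$; applying this with $a=d^Y_{ij}$ and $b=d^X_{ij}$ to each summand of~\eqref{eq:f_diff} then gives the lemma.

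The scalar monotonicity is a short case split. On $(-\infty,a]$ both minima equal $d$, so $g_{a,b}\equiv 0$. On $[a,b]$ one has $g_{a,b}(d)=d-a$, which is non-decreasing. On $[b,\infty)$ one has $g_{a,b}(d)=b-a$, which is constant. The function is continuous at the breakpoints $d=a$ (both sides evaluate to $0$) and $d=b$ (both sides evaluate to $b-a$), so $g_{a,b}$ is globally non-decreasing. A nonnegatively-weighted sum of non-decreasing functions is non-decreasing, hence $d\mapsto f(d;Y)-f(d;X)$ is non-decreasing; evaluating at $d_1\le d_2$ yields the claimed inequality.

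I do not expect any real obstacle here. The only subtlety worth flagging is the sign convention in~\eqref{eq:f_diff}: it is crucial that the $X$-term appears with a $+$ and the $Y$-term with a $-$ inside the sum, because the inclusion $X\subseteq Y$ pushes the \emph{$X$-delays upward}; swapping the roles would invert the direction of monotonicity. With the convention matched correctly, the lemma is essentially a restatement of the monotonicity of $d\mapsto\min\{d,b\}-\min\{d,a\}$ for $a\le b$, together with linearity of the sum.
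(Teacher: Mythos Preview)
Your proof is correct and follows essentially the same approach as the paper: the paper simply asserts, right before stating the lemma, that the expression in~\eqref{eq:f_diff} is non-decreasing in $d$ because $d^X_{ij}\ge d^Y_{ij}$ when $X\subseteq Y$. You have merely made explicit the scalar monotonicity of $d\mapsto\min\{d,b\}-\min\{d,a\}$ for $a\le b$ that the paper leaves to the reader.
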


Next, we consider the function
\[-D_X = f(d^*_X; X) \triangleq \min_{d}{f(d; X)}.\]
Consider two sets $X$ and $Y$ such that $X \subseteq Y$, then $d^*_X \ge d^*_Y$ due
to Lemma~\ref{lem:d_decreasing}.
Now let $\Delta(X; x_{km}) = f(d^*_{X \cup x_{km}}; X\cup x_{km}) - f(d^*_X; X)$
and $\Delta(Y; x_{km}) = f(d^*_{Y \cup x_{km}}; Y\cup x_{km}) - f(d^*_Y; Y)$
denote the gain obtained by adding $x_{km}$ to the sets $X$ and $Y$, respectively.

From Lemma~\ref{lem:d_decreasing} we have
\[\Delta(X; x_{km}) \ge f(d^*_{Y \cup x_{km}}; X\cup x_{km}) - f(d^*_X; X), \]
and
\[\Delta(Y; x_{km}) \le f(d^*_{Y \cup x_{km}}; Y\cup x_{km}) - f(d^*_X; Y). \]
Therefore, we have
\begin{align*}
&\Delta(X; x_{km}) - \Delta(Y; x_{km}) \\
& \quad \ge \left[ f(d^*_{Y \cup x_{km}}; X\cup x_{km}) - f(d^*_{Y \cup x_{km}}; Y\cup x_{km}) \right] \\
&\quad \quad \quad \quad \quad - \left[ f(d^*_X; X) - f(d^*_X; Y) \right].
\end{align*}

Moreover, from Lemma~\ref{lem:d_decreasing} we have
\begin{align*}
&\Delta(X; x_{km}) - \Delta(Y; x_{km}) \\
&\quad \ge \left[ f(d^*_X; X\cup x_{km}) - f(d^*_X; Y\cup x_{km}) \right] \\
& \quad \quad \quad \quad - \left[ f(d^*_X; X) - f(d^*_X; Y) \right] \\
&\quad = \left[ f(d^*_X; X\cup x_{km}) - f(d^*_X; X) \right] \\
&\quad \quad - \left[ f(d^*_X; Y\cup x_{km}) - f(d^*_X; Y) \right] \\
& \quad = \sum_{i,j}{\lambda_i q_{ij} \left(\min\{d^*_X, d^X_{ij}\} - \min\{d^*_X, d^{X\cup x_{km}}_{ij}\} \right)} \\
&\qquad - \sum_{i,j}{\lambda_i q_{ij} \left(\min\{d^*_X, d^Y_{ij}\} - \min\{d^*_X, d^{Y\cup x_{km}}_{ij}\} \right)} \\
&\quad \ge 0,
\end{align*}
where the last inequality follows from the fact that $d^X_{ij} - d^{X\cup x_{km}}_{ij} \ge d^Y_{ij} - d^{Y\cup x_{km}}_{ij}$ if $X \subseteq Y$. Note that $d^{X\cup x_{km}}_{ij} = \min\{d^X_{ij}, d^h_{im}\}$.

From $\Delta(X; x_{km}) - \Delta(Y; x_{km}) \ge 0$ we conclude that $F(x)$ is submodular.
\end{proof}

\end{appendices}

\end{document}